\newtheorem{theorem}{Theorem}[section]
\newtheorem{lemma}[theorem]{Lemma}
\newtheorem{proposition}[theorem]{Proposition}
\newtheorem{definition}[theorem]{Definition}
\newcommand{\srho}{\hat{S}^{\rho,\alpha}}
\newcommand{\stau}{\hat{S}^{\tau,\alpha}}
\title{Robust Semiparametric Graphical Models with Skew-Elliptical Distributions}
\author{%
  Gabriele Di Luzio\thanks{Department of Social and Economic Sciences, 
  Sapienza University of Rome, \texttt{gabriele.diluzio@uniroma1.it}}%
  \and
  Giacomo Morelli\thanks{Department of Statistical Sciences, 
  Sapienza University of Rome, \texttt{giacomo.morelli@uniroma1.it}. 
  Corresponding author.}%
}
\date{} % lascia vuoto oppure usa \date{\today}
\begin{document}
\maketitle

\begin{abstract}
We propose semiparametric estimators, called elliptical skew-(S)KEPTIC, for efficiently and robustly estimating non-Gaussian graphical models. Our approach extends the semiparametric elliptical framework to the meta skew-elliptical family, which accommodates skewness.
Theoretically, we show that the elliptical skew-(S)KEPTIC estimators achieve robust convergence rates for both graph recovery and parameter estimation. Through numerical simulations, we illustrate the reliable graph recovery performance of the elliptical skew-(S)KEPTIC estimators. Finally, we apply the new method to the daily log-returns of the stocks in the S\&P 500 index and obtain a sparser graph than with Gaussian copula graphical models.
\end{abstract}

\bigskip
\noindent\textbf{Keywords:} Elliptical distributions; Robust statistics; Skewness; Undirected graphical models.

%% ----------------------------------------------------------------
%% Main text
%% ----------------------------------------------------------------

% Conclusioni

%% ----------------------------------------------------------------
%% References
%% ----------------------------------------------------------------

\section{Introduction}
We consider the problem of estimating undirected graphical models. Given a $p$-dimensional random vector $\boldsymbol{X}=(X_1,\hdots,X_p)^{\top}$, we aim to estimate the undirected graph $G:=(V,E)$, where the vertices set $V:=\{1,\hdots,p\}$ contains the number of nodes corresponding to the $p$ variables, and the edge set $E$ indicates the conditional independence relationships between $X_1,\hdots,X_p$. 
Let $X_i$ and $X_j$ be random variables associated with the multivariate distribution $\boldsymbol{X}$ and define $X_{\setminus{\{i,j}\}}$ the set of variables excluding $X_i$ and $X_j$. If $X_i$ and $X_j$ are conditionally independent given $X_{\setminus{\{i,j}\}}$, then there exists no edge between the corresponding nodes $i$ and $j$ in the graph G, denoted as $(i,j) \notin E$.

One reliable approach to estimate the undirected graphical models is the semiparametric {nonparanormal} (NPN) method proposed in \citet{liu2009nonparanormal}, which relaxes the assumption of Gaussian data, considering a set of functions $\{f_j\}_{j=1}^p$ such that: $f(\boldsymbol{X})=(f_1(X_1),\hdots,f_p(X_p))^\top \sim \operatorname{N}(0,\Sigma^0)$, where $\Omega=(\Sigma^{0})^{-1}$. When $\boldsymbol{X}\sim \operatorname{NPN}(\Sigma^0,f)$, no edge connects $X_i$ and $X_j$ if $\Omega_{ij}=0$. 
Uncovering sparsity is subject to the estimation of the unknown correlation matrix. \citet{liu2012high} propose the SKEPTIC (\textit{Spearman/Kendall estimates preempt transformations to infer correlation}) estimator for the {nonparanormal} distribution, a regularized rank-based correlation coefficient estimator based on Kendall's tau and Spearman's rho statistics. 
The estimated correlation matrix is then plugged into shrinkage procedures that reveal the sparsity of the graph. 

The {nonparanormal} distribution exploits a Gaussian-based kernel and belongs to the broad class of {meta-elliptical} distributions described in \citet{han2014scale} and in \citet{liu2012transelliptical}, where a set of functions $\{f_j\}_{j=1}^p$ such that $f(\boldsymbol{X})\sim \mathcal{E}C(0,\Sigma^0,\xi)$ is considered. In particular,  \citet{liu2012transelliptical} show that the graph $G$ is encoded in the latent partial correlation matrix $\Xi:=-\left[\text{diag}(\Omega)\right]^{-1/2}\Omega\left[\text{diag}(\Omega)\right]^{-1/2}$, thus no edge connects $X_i$ and $X_j$ if $\Xi_{ij}=0$, which implies $\Omega_{ij}=0$. A major drawback of the {meta-elliptical} class is that, while it includes skewed distributions, the dependence structure of its copula does not contain the skewness parameters, preventing them from playing a role in identifying the conditional independence structure.

We propose rank-based correlation matrix estimators, which we refer to as elliptical skew-(S)KEPTIC estimators. These estimators refer to the second moment of the {meta skew-elliptical} family of distributions, that we introduce in this paper considering a set of functions $\{f_j\}_{j=1}^p$ such that $f(\boldsymbol{X})\sim \operatorname{SU}\mathcal{E}(0,\Sigma^0,\Lambda,h^{(p+q)},\tau,\Gamma)$, where the unified skew-elliptical ($\operatorname{SU}\mathcal{E}$) distribution is described in \citet{arellano2010skew-e}. Extending the SKEPTIC estimator to the meta skew-elliptical family of distributions, our methodology allows for 
constructing graphical models by using a copula approach to estimate the second moment of the distribution taking into account the univariate directional skewness of the data.
To do so, we consider the unified skew-elliptical distribution as a combination of the closed skew-normal distribution introduced in \citet{dominguez2003multivariate}, which belongs to the class of $\operatorname{SU\mathcal{E}}$  distributions. %\green{The class of unified skew-elliptical distributions, which generalizes the closed skew-normal, enjoys key closure properties under marginalization, conditioning, and linear transformations. 
%These properties enable e.g., to define conjugate priors in probit, tobit, and multinomial probit regression models based on $\operatorname{SU\mathcal{E}}$ distributions, as both the prior and posterior remain within the same family (\citealp{durante2019conjugate}, \citealp{karling2024conjugacy}). }
%\green{The unified skew-elliptical distribution can be written in terms of } 
The closed skew-normal distribution admits a stochastic representation involving both a normal and a half-normal distribution, for which the SKEPTIC estimator of the correlation matrix is defined. The elliptical skew-(S)KEPTIC extends this estimator by incorporating a correction term that accounts for the directional skewness of the closed skew-normal distribution. Remarkably, we show that, given a bivariate relationship, concordant skewness directions strengthen the dependence between the variables, whereas discordant skewness directions weaken it.

The estimated correlation matrix is then plugged into existing shrinkage procedures, see, e.g., (\citet{meinshausen2006high}; \citet{banerjee2008model}; \citet{friedman2008sparse}; \citet{yuan2010high}; \citet{cai2011constrained}). %\green{For instance, \citet{meinshausen2006high} study a neighborhood selection approach with the lasso operator, \citet{banerjee2008model} and \citet{friedman2008sparse} propose an approach based on penalized likelihood, and  \citet{yuan2010high} and \citet{cai2011constrained} introduce, respectively, graphical Dantzig selector and CLIME which are two methods based on linear programming that provide an improvement in terms of theoretical properties respect to the other shrinkage procedures. }
The elliptical skew-(S)KEPTIC estimators that generalize the SKEPTIC estimator are based on Kendall's tau and Spearman's rho statistics. 
When the unified skew-elliptical distribution reduces to the closed skew-normal distribution, we consider both the statistics for estimating the unknown correlation matrix and we define the elliptical skew-SKEPTIC estimator. Otherwise, we only consider Kendall's tau statistic and we define the elliptical skew-KEPTIC estimator, since Spearman's rho statistic is not invariant for the class of elliptical distributions (\citealp{hult2002multivariate}).
%The proposed framework extends to graphical models incorporating skewness, and adding flexibility to \textit{nonparanormal}-type data transformations.

%When $f(\boldsymbol{X})\sim SU\mathcal{E}(0,\Sigma^0,\Lambda,h^{(p+q)},\tau,\Gamma)$  reduces to $f(\boldsymbol{X})\sim CSN(0,\Sigma^0,\Gamma,0,\Delta)$
%we define the elliptical skew-SKEPTIC estimator, which takes into account Spearman's rho statistic. When $f(\boldsymbol{X})\sim SU\mathcal{E}(0,\Sigma^0,\Lambda,h^{(p+q)},\tau,\Gamma)$ we consider elliptical skew-KEPTIC, since Spearman's rho statistic is not invariant for elliptical distributions (\citetalp{hult2002multivariate}). 
%We introduce the elliptical skew-(S)KEPTIC estimator to consider both of the previously mentioned estimators.
%Nevertheless, Spearman's rho is not invariant in the class of elliptical distributions thus we consider this statistic only when  $\operatorname{SU}\mathcal{E}$ reduces to \textsc{CSN}distribution, and in this case, we define the elliptical skew-SKEPTIC estimator.
%Whereas, Kendall's tau statistic is invariant in the class of elliptical distributions, thus when $f(\boldsymbol{X})\sim SU\mathcal{E}(0,\Sigma^0,\Lambda,h^{(p+q)},\tau,\Gamma,\xi)$ we define an estimator based on Kendall's tau statistic for all the distributions of the family, which we call the elliptical skew-KEPTIC estimator. 

We prove that the elliptical skew-(S)KEPTIC estimators have the same rate of convergence of the SKEPTIC estimator of \cite{liu2012high} plus a error rate from the estimate of the skewness parameters. The rates of convergence are studied in $\|\cdot \|_{\max}$ norm.

We provide a numerical analysis to support the theory. Through a backward approach, we define the true sparsity pattern given by $\Omega$ and show the graph recovery using the elliptical skew-(S)KEPTIC estimators, which are compared to the SKEPTIC estimator. The elliptical skew-(S)KEPTIC estimators results in similar  graph recovery as the SKEPTIC, which is proven to be optimal in \cite{liu2012high}, showing statistical efficiency of the proposed method.

Prior studies by \citet{zareifard2016skew}, \citet{nghiem2022estimation}, and \citet{sheng2023skewed} have explored graphical models while incorporating skewness into the sparsity pattern estimation. 
\citet{zareifard2016skew} exploit a multivariate closed skew-normal distribution to define the skew Gaussian graphical models, which are estimated through a Bayesian approach. \citet{nghiem2022estimation} propose a novel nodewise regression approach to estimate the graph on data generated from a generalized multivariate skew-normal distribution.
Instead, \citet{sheng2023skewed} include the shape parameter in the determination of the precision matrix using the skew-normal distribution of \citet{azzalini1985class} and developing an algorithm that penalizes the likelihood of this distribution to estimate $\Omega$ and, consequently, the graph. 
However, these studies do not address the direct estimation of the correlation matrix. By contrast, we propose an estimator that explicitly incorporates the skewness parameters in the dependence structure through a copula-based approach.

The paper is organized as follows. In Section \ref{sec2}, we describe the background, including a review of the SKEPTIC estimator
and the semiparametric elliptical distribution. 
In Section \ref{sec3}, we introduce the meta skew-elliptical distribution, which forms the basis to derive the elliptical skew-(S)KEPTIC estimators whereas in Section \ref{shrinkage_procedures}, we provide some shrinkage procedures to estimate the precision matrix $\Omega$. Theoretical properties of the proposed estimator are discussed in Section \ref{sec4}. Section \ref{sec5}, presents simulation analysis comparing the elliptical skew-(S)KEPTIC estimators with the SKEPTIC estimator on synthetic data, and reports an empirical application to S\&P 500 stock returns.  Section \ref{sec7} concludes the paper.

\section{Background}\label{sec2}
In this section, we briefly recall the literature preceding our article: the SKEPTIC estimator of \citet{liu2012high}, the semiparametric elliptical distributions in \cite{liu2012transelliptical} and \citet{han2014scale}. The section concludes with the introduction of the unified skew-elliptical family of distributions in \citet{arellano2010skew-e}, and its link with the undirected graphical models.
\subsection{\textit{Notation}}
Here is the notation used throughout the paper. Let $A=\left[A_{ij}\right]\in \mathbb{R}^{p\times p}$ and $\boldsymbol{a}=(a_1,\hdots,a_p)\in \mathbb{R}^p$. For $1\leq q <\infty$, we define $\|\boldsymbol{a}\|_{q}=\left(\sum_{i=1}^p|\boldsymbol{a}|^q\right)^{1/q}$, when $q=\infty$: $\|\boldsymbol{a}\|_{\infty}=\max_{1\leq i\leq p}|\boldsymbol{a}_{i}|$. Regarding the $A$ matrix, when $q=1$: $\|A\|_{1} = \max_{1\leq j\leq q}\sum_{i=1}^p |A_{ij}|$, and when $q=\infty$, $\|A\|_{\infty}=\max_{1\leq i \leq q}\sum_{j=1}^p |A_{ij}|$. We denote $\|A\|_{\max}=\max_{i,j}|A_{ij}|$. A vector without its $j$-th element is defined: $\boldsymbol{a}_{\backslash j}=(\boldsymbol{a}_{1},\hdots, \boldsymbol{a}_{j-1},\boldsymbol{a}_{j+1},\hdots,\boldsymbol{a}_{p})^\top \in \mathbb{R}^{p-1}$, and we denote $A_{\backslash i,\backslash j} \in \mathbb{R}^{p-1 \times p-1}$ the matrix without the $i$-$th$ row and the $j$-th column. We define: $\lambda_{\min}(A)$ and $\lambda_{\max}(A)$, respectively, the minimum and the maximum eigenvalue of $A$. $\Omega$ denotes a sparse matrix, and we define the degree of $\Omega$: $\operatorname{deg}(\Omega)=\max_{1\leq i\leq p} \sum_{j=1}^p\operatorname{I}(|\Omega_{ij}|\neq 0)$, where $i \neq j$.
\subsection{The Nonparanormal\label{sec: npn} SKEPTIC}

A random vector $\boldsymbol{X}=(X_1,...,X_p)^{\top}$ has a {nonparanormal} distribution if there exists functions $\{f_j\}_{j=1}^p$ such that $\boldsymbol{Z}\equiv f(\boldsymbol{X})\sim \operatorname{N}(0,\Sigma^{0})$, where $f(\boldsymbol{X})=(f_1(X_1),...,f_p(X_p))^{\top}$, we then write $\boldsymbol{X} \sim \operatorname{NPN}(\Sigma^0,f)$. Further, if the functions $\{f_j\}_{j=1}^p$ are monotone and differentiable, the joint distribution of $\boldsymbol{X}$ is a Gaussian copula. The focal point is to find a function, $f(\boldsymbol{X})$, that transforms our data into normally distributed data.

Given \( n \) observations of the \( j \)-th variable, \( x_{1j}, \dots, x_{nj} \), we define the empirical cumulative distribution function (ECDF) $\widehat{F}_j(w) = \frac{1}{n} \sum_{i=1}^{n} \mathbb{I}(x_{ij} \leq w),$
where \( \mathbb{I}(\cdot) \) is the indicator function. The transformation function \( \widehat{f}_j(x_{ij}) \) is then defined as  $\widehat{f}_j(x_{ij}) = \Phi^{-1}(\widehat{F}_j(x_{ij}))$,
where \( \Phi^{-1}(\cdot) \) denotes the quantile normal distribution function. In practice, the ECDF at each observation \( x_{ij} \) can be approximated using its normalize rank:  $\widehat{F}_j(x_{ij}) \approx \widehat{r}_{ij} = \frac{r_{ij}}{n+1}$,
where \( r_{ij} \) is the rank of \( x_{ij} \) among \( x_{1j}, \dots, x_{nj} \). Thus, the transformation results in:  
\begin{equation}\label{eq: normal_score}
    \widehat{f}_j(x_{ij}) = \Phi^{-1} \left( \widehat{r}_{ij}\right).
\end{equation}  
This approach ensures that the transformed data approximate a standard normal distribution while preserving the rank structure of the original data. We consider the following statistics:
\begin{align}
    &\text{(Spearman's rho)}\ \widehat{\rho}_{ij}=\frac{\sum_{k=1}^n(r^k_j-\bar{r}_j)(r^k_i-\bar{r}_i)}{\sqrt{\sum_{k=1}^n(r^k_j-\bar{r}_j)^2\sum_{k=1}^n(r^k_i-\bar{r}_i)^2}},\\
    &\text{(Kendall's tau)}\ \widehat{\tau}_{ij}=\frac{2}{n(n-1)}\sum_{1\leq k<k'\leq n}\text{sign}((x^k_j-x^{k'}_j)(x^k_i-x^{k'}_i)).
\end{align}
Both $\widehat{\rho}_{ij}$ and $\widehat{\tau}_{ij}$ are non-parametric correlations between the empirical realizations of random variables $X_j$ and $X_i$. Let \(\widetilde{X}_i\) and \(\widetilde{X}_j\) be independent copies of the random variables \(X_i\) and \(X_j\),  \(F_i\) and \(F_j\) denote the cumulative distribution functions (CDFs) of \(X_i\) and \(X_j\), the population versions of Spearman’s rho and Kendall’s tau are defined as:
$\rho_{ij} := \operatorname{Corr}(F_i(X_i), F_j(X_i))$ and
$\tau_{ij} := \operatorname{Corr}(\operatorname{sign}(X_i - \widetilde{X}_i), \operatorname{sign}(X_j - \widetilde{X}_j))$,
respectively.
For {nonparanormal} distributions, there is a connection between Spearman's rho and Kendall's tau to estimate the underlying Pearson's correlation coefficient $\Sigma^0_{ij}$, that we  recall.
\begin{lemma}\label{Lemma: SKEPTIC}
(\citet{kruskal1958ordinal}, \citet{liu2012high}):  Suppose that there exists a set of $f=\left\{f_1, \ldots, f_p\right\}$ monotonic functions such that $\boldsymbol{X}\sim \operatorname{NPN}(\Sigma^0,f)$, then we have $\Sigma_{ij}^0=2\operatorname{sin}\left(\frac{\pi}{6}\rho_{ij}\right)=\operatorname{sin}\left(\frac{\pi}{2}\tau_{ij}\right)$.
\end{lemma}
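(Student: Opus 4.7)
The plan is to exploit the invariance of Spearman's rho and Kendall's tau under strictly monotone marginal transformations, which reduces the lemma to the classical Gaussian computations of Greiner/Kruskal for a bivariate normal pair. Write $Z_j = f_j(X_j)$ so that $\boldsymbol{Z}\sim N(0,\Sigma^0)$. Since each $f_j$ is monotone, the indicator $\operatorname{sign}(X_i-\widetilde{X}_i)\operatorname{sign}(X_j-\widetilde{X}_j)$ equals $\operatorname{sign}(Z_i-\widetilde{Z}_i)\operatorname{sign}(Z_j-\widetilde{Z}_j)$ almost surely (up to a joint sign flip that cancels in the product), and $F_j(X_j)=\Phi(Z_j/\sqrt{\Sigma^0_{jj}})$ in distribution, so the copula of $(X_i,X_j)$ agrees with that of $(Z_i,Z_j)$. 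Hence $\tau_{ij}(\boldsymbol{X})=\tau_{ij}(\boldsymbol{Z})$ and $\rho_{ij}(\boldsymbol{X})=\rho_{ij}(\boldsymbol{Z})$, and the problem reduces to computing these population quantities for a centered bivariate normal with correlation $r:=\Sigma^0_{ij}$ (noting that both statistics are scale-free, so we may assume $\Sigma^0_{ii}=\Sigma^0_{jj}=1$).

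For Kendall's tau, I would write $\tau_{ij}=2\Pr\{(Z_i-\widetilde{Z}_i)(Z_j-\widetilde{Z}_j)>0\}-1$. The difference vector $(Z_i-\widetilde{Z}_i,Z_j-\widetilde{Z}_j)$ is centered bivariate normal with correlation $r$, so this probability is a standard orthant probability for a centered Gaussian and equals $\tfrac{1}{2}+\tfrac{1}{\pi}\arcsin(r)$ by Sheppard's formula. This yields $\tau_{ij}=\tfrac{2}{\pi}\arcsin(r)$, i.e.\ $r=\sin(\tfrac{\pi}{2}\tau_{ij})$.

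For Spearman's rho, using the fact that $F_i(Z_i)$ and $F_j(Z_j)$ are Uniform$(0,1)$, one has $\rho_{ij}=12\,\mathbb{E}[F_i(Z_i)F_j(Z_j)]-3$. The cleanest route is the Hoeffding-type identity $\rho_{ij}=3\bigl(\Pr\{(Z_i-Z_i')(Z_j-Z_j'')>0\}-\Pr\{(Z_i-Z_i')(Z_j-Z_j'')<0\}\bigr)$ where $(Z_i',Z_j')$ and $(Z_i'',Z_j'')$ are independent copies of $(Z_i,Z_j)$. The vector $(Z_i-Z_i',Z_j-Z_j'')$ is again centered bivariate normal, but now with correlation $r/2$ (since only one coordinate is shared), so another application of Sheppard's formula gives $\rho_{ij}=\tfrac{6}{\pi}\arcsin(r/2)$, hence $r=2\sin(\tfrac{\pi}{6}\rho_{ij})$.

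I expect the main obstacle to be the Spearman's rho computation, specifically identifying the correct three-random-vector construction that makes the relevant Gaussian have correlation $r/2$ rather than $r$; getting this wrong would yield $\sin(\tfrac{\pi}{2}\rho)$ instead of the desired $2\sin(\tfrac{\pi}{6}\rho)$. The Kendall's tau half is essentially immediate once Sheppard's orthant formula is invoked, and the invariance argument is routine but should be stated carefully to cover the case of possibly decreasing $f_j$.
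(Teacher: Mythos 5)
The paper offers no proof of this lemma: it is recalled verbatim from \citet{kruskal1958ordinal} and \citet{liu2012high} as background. Your argument is precisely the classical derivation underlying those references---reduce to the Gaussian case by invariance of the copula under increasing marginal transforms, then apply Sheppard's orthant formula to $(Z_i-\widetilde Z_i,\,Z_j-\widetilde Z_j)$, which has correlation $r$, for Kendall's tau, and to the three-copy difference vector $(Z_i-Z_i',\,Z_j-Z_j'')$, which has correlation $r/2$, for Spearman's rho---and it is correct, including the crucial factor of $1/2$ that produces $2\sin(\tfrac{\pi}{6}\rho_{ij})$ rather than $\sin(\tfrac{\pi}{2}\rho_{ij})$. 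The one place to tighten is the parenthetical about sign flips: if a single $f_j$ is decreasing, $\operatorname{sign}(Z_j-\widetilde Z_j)=-\operatorname{sign}(X_j-\widetilde X_j)$ in that coordinate only, so the flip does \emph{not} cancel in the product and $\tau_{ij}$ would change sign; the lemma implicitly assumes the $f_j$ are monotone increasing (as in the standard nonparanormal definition), a caveat you correctly flag at the end but should state as an assumption rather than wave away mid-proof.
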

From these results two estimators of the unknown correlation matrix $(\Sigma^0_{ij})$  are $\widehat{S}^{\rho}=\left[\widehat{S}^{\rho}_{ij}\right]$ and $\widehat{S}^{\tau}=\left[\widehat{S}^{\tau}_{ij}\right]$, where: $\widehat{S}^{\rho}_{ij}=2\operatorname{sin}\left(\frac{\pi}{6}\widehat{\rho}_{ij}\right)\cdot \operatorname{I}(i\neq j)+\operatorname{I}(i=j)$ and $\widehat{S}^{\tau}_{ij}=\operatorname{sin}\left(\frac{\pi}{2}\widehat{\tau}_{ij}\right)\cdot \operatorname{I}(i\neq j)+\operatorname{I}(i=j)$.
%where we recall that 
Spearman's rho and Kendall's tau statistics are invariant under strictly increasing transformations of the underlying random variable. %\textcolor{blue}{This concept will be fundamental to the Definition of the elliptical skew-(S)KEPTIC estimator. Si potrebbe togliere!!}  
\subsection{The {meta-elliptical} distribution}
The {meta-elliptical} distribution introduced in \citet{han2014scale} relaxes the assumption of elliptically distributed data, in a similar way as the {nonparanormal} of \citet{liu2009nonparanormal} does for Gaussian distributions. The {meta-elliptical} family is a strict extension of the {nonparanormal} distribution. 
%Elliptical distributions are a class of probability distributions denoted by $\boldsymbol{X}\sim \mathcal{EC}_p(\mu,\Sigma,\xi)$, where $\xi$ is a random variable. If $\boldsymbol{X}$ has an elliptical distribution, then it has the following stochastic representation, \citet{han2014scale}.

\begin{definition}
    (\citet{han2014scale}, \citet{fang2018symmetric}) A random vector $\boldsymbol{X}$ has an elliptical distribution, $\boldsymbol{X}\sim \mathcal{E}C_{p}(\mu,\Sigma,\xi)$, if $\boldsymbol{X}$ has the following stochastic representation: $\boldsymbol{X}\overset{d}{=}\mu+\xi A U$. Where: $\mu \in \mathbb{R}^p$, $q:=rank(A)$, $A$ is  a $p \times q$ matrix such that $\Sigma=AA^\top$, $\xi\geq 0$ is a random variable independent of $U$, $U\in \mathbb{S}^{q-1}$, which is uniformly distributed of the unit sphere in $\mathbb{R}^q$.  
\end{definition}

If $\xi$ has a density, then $\boldsymbol{X}$ has a density of the form: 
\begin{equation}\label{ell_disr}
    f_p\left(x ; \mu, \Sigma, h^{(p)}\right)=|\Sigma|^{-1 / 2} h^{(p)}((x-\mu)^\top\Sigma^{-1}(x-\mu)),
\end{equation}
where $h^{(p)}$ is a scale function uniquely determined by the distribution of $\xi$, and it is called density generator.

Let $\mathcal{R}_p : = \{\Sigma \in \mathbb{R}^{p \times p}:\Sigma^{\top}=\Sigma, \text{diag}\left(\Sigma\right)=1, \Sigma \succeq 0\}$, in analogy to the {nonparanormal} case, the {meta-elliptical} distribution is defined as follows.
\begin{definition}(\citet{han2014scale})
    A continuous random vector $\boldsymbol{X}=(X_1,\hdots,X_p)^{\top}$ has a {meta-elliptical} distribution, denoted by $\boldsymbol{X}\sim M\mathcal{E}_p(\Sigma^0, \xi;f)$, if there exist a set of univariate monotone functions $f_1,\hdots,f_p$ such that $(f_1(X_1),\hdots,f_p(X_p))^{\top}\sim \mathcal{EC}_p(0,\Sigma^0,\xi)$, where $\Sigma^0 \in \mathcal{R}_p$ is the {latent  correlation matrix}. 
\end{definition}
For the distributions that belong to the {meta-elliptical} family, Kendall's tau statistic in Lemma \ref{Lemma: SKEPTIC} can be exploited to estimate the unknown correlation matrix.
Nevertheless, the meaning of a missing edge is different for the {nonparanormal} case. Indeed, when $\boldsymbol{X}\sim \operatorname{NPN}(\Sigma^{0},f)$ a missing edge implies conditional independence. On the other hand, when $\boldsymbol{X} \sim M\mathcal{E}(\Sigma^{0},\xi,f)$, if $\Omega_{ij}=0$, then $X_i$ and $X_j$ are conditionally uncorrelated (\citealp{rossell2021dependence}).

 The {meta-elliptical} family contains many useful distributions, among which asymmetric distributions. However, the estimate of the correlation matrix does not directly take into account the skewness parameters. We extend the {meta-elliptical} family giving a direct contribution of the skewness parameters for the conditional independence relations, and thus considering the SU$\mathcal{E}$ family of distributions.

\subsection{Unified Skew-Elliptical (SU$\mathcal{E}$) distributions}
We recall the family of $\operatorname{SU\mathcal{E}}$ distributions described in \citet{arellano2010skew-e}. If a random vector $\boldsymbol{X}$ has a $\operatorname{SU\mathcal{E}}$ distribution, we write $\boldsymbol{X}\sim \operatorname{SU\mathcal{E}}_{p,p}(\mu,\Sigma,\Lambda,h^{(p)},0,\Gamma)$. We focus on two specifications of the family, which are: the closed skew-normal ({CSN}) distribution of \citet{dominguez2003multivariate}\footnote{We build on the closed skew-normal (CSN) distribution, which represents a reparametrization of the unified skew-normal (SUN) distribution employed by \citet{arellano2010skew-e}.} and the unified skew-$t$ (SUT) distribution (\citet{arellano2010skew-t}). Table \ref{tab: SUE} shows how the $\operatorname{SU\mathcal{E}}$ distribution reduces into {CSN}  and {SUT} distributions. Formally, we define the $\operatorname{SU\mathcal{E}}$ distribtuion.

\begin{definition}(\citealp{arellano2010skew-e})\label{def_sue}
A continuous p-dimensional random vector $\boldsymbol{X}$ has a multivariate unified skew-elliptical $\operatorname{(SU\mathcal{E})}$ distribution, denoted by $\boldsymbol{X}\sim SU\mathcal{E}_{p,p}(\mu,\Sigma,\Lambda,h^{(p)},0,\Gamma)$, if its density function at $x \in \mathbb{R}^p$ is:
\begin{equation}
   f_{X}(x) = 2^p f_p\left(x ; \mu, \Sigma, h^{(p)}\right) F_p\left(\Lambda \sigma^{-1}(x-\mu) ; \Gamma, h_{Q(z)}^{(p)}\right) ,
\end{equation}
 where $f_p\left(x ; \mu, \Sigma, h^{(p)}\right)$ is the elliptical distribution defined in \eqref{ell_disr}, and  $ F_p\left(\Lambda \sigma^{-1}(x-\mu) ; \Gamma, h^{(p)}_{Q(x)}\right)$ the $p$-dimensional centered elliptical cumulative distribution function with $p \times p$ dispersion matrix $\Gamma$, $\Lambda$ is a $p \times p$ real matrix controlling shape. In conclusion, $Q(\mathbf{x})$ is a quadratic form defined as $Q(\mathbf{x})=(\mathbf{x}-\mu)^{\top} \boldsymbol{\Sigma}^{-1}(\mathbf{x}-\mu) \in \mathbb{R}^{+}$, whereas $h_{Q(\mathbf{x})}^{(p)}(u)=h^{(2p)}[u+Q(\mathbf{x})] / g^{(p)}[Q(\mathbf{x})]$ denotes the elliptical conditional density generator.
\end{definition}
\begin{table}[h]
\centering
\begin{tabular}{ c|l }
  $\operatorname{SU\mathcal{E}}$ & \(h^{(p)}\)  \\
  \hline
  {CSN} & \(h^{(p)} = \phi^{(p)}\)$(u)$ \quad \text{where:} \(\phi^{(p)}(u) = \frac{1}{(2\pi)^{p/2}}\exp\left(-\frac{u^2}{2}\right)\)  \\
  
  {SUT} & \(h^{(p)} = t^{(p)}(u,\nu)\) \text{   where:} \(t^{(p)}(u,\nu) = \frac{\Gamma\left(\frac{\nu+p}{2}\right)}{\Gamma\left(\frac{\nu}{2}\right)\left(\pi a\right)^{p/2}} \left(1 + \frac{u}{\nu}\right)^{-\frac{\nu+p}{2}}\) 
\end{tabular}
\caption{Subclasses of skew-elliptical distribution based on the density generator \(h^{(p)}\).}\label{tab: SUE}
\end{table} 
The relation between the {CSN} and the $\operatorname{SU\mathcal{E}}$ is defined through the following stochastic representation. 

\begin{proposition}(\citealt{arellano2010skew-e})\label{sue-sun}
Given $\boldsymbol{X}\sim \operatorname{SU\mathcal{E
}}_{p,p}(\mu,\Sigma,\Lambda,h^{(p)},0,\Gamma)$, we can write $\boldsymbol{X}$ as a mixture of a {CSN} distribution: 
$\boldsymbol{X}=\mu+\sigma  V_0^{-1/2}\boldsymbol{Z}_0$, where $V_0$ is the cumulative distribution function of a uni-dimensional random variable, called mixture variable, such that: $V_0(0)=0$, and $V_0$ is independent of $\boldsymbol{Z}_0\sim \operatorname{CSN}_{p,p}(0,\Sigma^0,\Lambda,0,\Gamma)$.
\end{proposition}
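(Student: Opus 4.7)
The plan is to exploit the scale-mixture-of-normals representation of elliptical distributions, lifted to the skew setting. Concretely, one realizes $\boldsymbol{X}$ as $\mu + \sigma V_0^{-1/2}\boldsymbol{Z}_0$ with $\boldsymbol{Z}_0 \sim \operatorname{CSN}_{p,p}(0,\Sigma^0,\Lambda,0,\Gamma)$ conditionally on $V_0$, and then verifies by integration against the law of $V_0$ that the marginal density of $\boldsymbol{X}$ coincides with the SU$\mathcal{E}$ density in Definition \ref{def_sue}. The requirement $V_0(0)=0$ just ensures that $V_0$ is almost surely positive, so that $V_0^{-1/2}$ is well defined.

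First I would write the elliptical density generator as a mixture of normal generators,
\begin{equation*}
h^{(p)}(u) = \int_0^\infty v^{p/2}\phi^{(p)}(v^{1/2}u)\, dF_{V_0}(v),
\end{equation*}
which holds for the subclasses tabulated in Table \ref{tab: SUE} (with $V_0$ degenerate at $1$ in the normal case and $V_0$ proportional to a chi-square in the $t$ case), and do the analogous thing for the $(2p)$-dimensional generator $h^{(2p)}$ that produces the conditional generator $h^{(p)}_{Q(x)}$ through the ratio with $g^{(p)}[Q(x)]$. Using the same mixing variable in both pieces is crucial: at $V_0=v$ the quantity $h^{(p)}_{Q(x)}$ collapses to the conditional generator of a $(2p)$-dimensional normal at scale $v^{-1/2}$. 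Substituting these representations into $f_X(x) = 2^p f_p(x;\mu,\Sigma,h^{(p)}) F_p(\Lambda\sigma^{-1}(x-\mu);\Gamma,h^{(p)}_{Q(x)})$ and exchanging the order of integration via Fubini gives an integrand which, at each $v$, is the product of a Gaussian density with covariance $v^{-1}\Sigma$ and a Gaussian CDF with dispersion $v^{-1}\Gamma$ evaluated at $\Lambda\sigma^{-1}(x-\mu)$. This is precisely the density of $\mu + \sigma v^{-1/2}\boldsymbol{Z}_0$, so marginalizing over $V_0$ delivers the claim.

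The main obstacle is the bookkeeping of the factors of $v^{\pm 1/2}$ appearing in three separate places in the SU$\mathcal{E}$ density: the quadratic form $Q(x)$ inside $f_p$, the argument $\Lambda\sigma^{-1}(x-\mu)$ of the elliptical CDF $F_p$, and the conditional generator $h^{(p)}_{Q(x)}$ itself. Because the latter is built from the \emph{joint} $(2p)$-dimensional generator $h^{(2p)}$, one must verify that a single mixing variable $V_0$ is consistent with both the marginal and conditional mixture representations; this is essentially the statement that scale mixtures of normals are closed under marginalization and conditioning, with the same mixing distribution. Once this consistency is in place, the identification of the integrand with a true CSN density (rather than some hybrid asymmetric object) is immediate, and everything else is an exchange of integration orders together with the standard Jacobian for the affine transformation $\boldsymbol{z}_0 \mapsto \mu + \sigma V_0^{-1/2}\boldsymbol{z}_0$.
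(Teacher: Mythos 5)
The paper gives no proof of Proposition \ref{sue-sun}: it is imported verbatim (with a citation) from \citet{arellano2010skew-e}, so there is no internal argument to compare yours against. Your strategy --- realize $\boldsymbol{X}$ conditionally on $V_0=v$ as an affine image of a $\operatorname{CSN}$ vector, expand both the marginal generator $h^{(p)}$ and the joint generator $h^{(2p)}$ as scale mixtures of normal generators with the \emph{same} mixing law, and then integrate out $v$ by Fubini --- is exactly the argument used in the cited source for the scale-mixture subclass of $\operatorname{SU\mathcal{E}}$, so the route is the right one.

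That said, two points keep this from being a complete proof. First, the step you yourself flag as the ``main obstacle'' is asserted rather than proved: you need to show that if $h^{(2p)}(u)=\int_0^\infty v^{p}\phi^{(2p)}(v\,u)\,dF_{V_0}(v)$ (in whatever convention you fix for the argument of the generator --- note the paper's Table \ref{tab: SUE} writes $\exp(-u^2/2)$, so the powers of $v$ in your mixture identity must be checked against that convention), then the induced conditional generator $h^{(p)}_{Q(x)}(u)=h^{(2p)}[u+Q(x)]/g^{(p)}[Q(x)]$ is itself a mixture over the \emph{posterior} law of $V_0$ given the quadratic form $Q(x)$, and that this posterior reweighting is precisely what is needed for the product $f_p\cdot F_p$ to factor, at each $v$, into a Gaussian density times a Gaussian CDF with matched scale $v^{-1/2}$. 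Without writing out that the $g^{(p)}[Q(x)]$ normalizer in the denominator cancels the $v$-dependent weight coming from the marginal piece, the claim that the integrand ``is precisely the density of $\mu+\sigma v^{-1/2}\boldsymbol{Z}_0$'' is not established --- and this cancellation is the entire nontrivial content of the proposition. Second, the representation only holds for generators that \emph{are} scale mixtures of normals; the general $\operatorname{SU\mathcal{E}}$ family of Definition \ref{def_sue} is larger (not every $h^{(p)}$ admits such a mixture), so the proposition as literally stated needs the restriction you mention only in passing for Table \ref{tab: SUE}. You should state that hypothesis explicitly rather than treat it as an example.
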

\begin{lemma}
(\citealp{arellano2010skew-e})\label{sue-sut}  If $\text{ $V_0$}\sim Gamma(\nu/2,\nu/2)$, then the unified skew-elliptical distribution $\boldsymbol{X}\sim \operatorname{SU}\mathcal{E}_{p,p}(\mu,\Sigma,\Lambda,h^{(p)},0,\Gamma)$ reduces to a unified skew-$t$ $\operatorname{(SUT)}$ distribution: $\boldsymbol{X}\sim \operatorname{SUT}_{p,p}(\mu,\Sigma,\Lambda,\nu,0,\Gamma)$. 
\end{lemma}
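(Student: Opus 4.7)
The plan is to verify the density reduction directly, using the mixture representation from Proposition \ref{sue-sun}, which writes $\boldsymbol{X} = \mu + \sigma V_0^{-1/2} \boldsymbol{Z}_0$ with $\boldsymbol{Z}_0 \sim \operatorname{CSN}_{p,p}(0,\Sigma^0,\Lambda,0,\Gamma)$ independent of $V_0$. The CSN density takes the skew-normal form $2^p \phi_p(z;0,\Sigma^0)\,\Phi_p(\Lambda z;0,\Gamma)$, so by a routine change of variables the conditional density of $\boldsymbol{X}$ given $V_0=v$ becomes
\[
 f_{\boldsymbol{X}\mid V_0=v}(x)=2^p v^{p/2}\,\phi_p\bigl(v^{1/2}\sigma^{-1}(x-\mu);0,\Sigma^0\bigr)\,\Phi_p\bigl(v^{1/2}\Lambda\sigma^{-1}(x-\mu);0,\Gamma\bigr),
\]
up to a Jacobian absorbed by $|\Sigma|^{-1/2}$.

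Next I would marginalize against the Gamma$(\nu/2,\nu/2)$ density $f_{V_0}(v)=(\nu/2)^{\nu/2}\Gamma(\nu/2)^{-1}v^{\nu/2-1}e^{-v\nu/2}$. For the normal-density factor, combining $v^{p/2}\exp(-vQ(x)/2)$ (with $Q(x)=(x-\mu)^\top\Sigma^{-1}(x-\mu)$) and $v^{\nu/2-1}e^{-v\nu/2}$ produces an integrand proportional to $v^{(\nu+p)/2-1}e^{-v(\nu+Q(x))/2}$. Evaluating the Gamma integral $\int_0^\infty v^{a-1}e^{-bv}dv=\Gamma(a)b^{-a}$ and simplifying gives exactly the multivariate-$t$ density $t^{(p)}(Q(x);\nu)/|\Sigma|^{1/2}$, which is the classical Gamma-scale derivation of Student's $t$ from the normal.

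The more delicate step is the treatment of the skew-CDF factor $\Phi_p(v^{1/2}\Lambda\sigma^{-1}(x-\mu);0,\Gamma)$, whose argument is itself $v$-dependent. I would apply Fubini to exchange the expectation over $V_0$ with the integral defining $\Phi_p$, then reconstitute the result using the representation of the multivariate $t$ CDF as a Gamma-scale mixture of normal CDFs. After pulling out the $t^{(p)}$ factor above, the residual $v$-weight becomes a Gamma distribution whose shape parameter is shifted by $p$, conditional on $Q(x)$, reflecting the elliptical conditional density generator $h_{Q(x)}^{(p)}$ of Definition \ref{def_sue}; hence the CDF factor collapses to a Student-$t$ CDF with $\nu+p$ degrees of freedom evaluated at $\Lambda\sigma^{-1}(x-\mu)$, matching the SUT parametrization of \citet{arellano2010skew-t}.

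The main obstacle I anticipate is the careful bookkeeping in this last step: tracking the $v$-dependent rescaling of the dispersion $\Gamma$, matching constants, and confirming that the inflated degrees of freedom $\nu+p$ appear in the correct place inside the $t$ CDF. A cleaner alternative that sidesteps these calculations would be to identify both sides via their characteristic functions, since both distributions are uniquely determined by the common mixing law of $V_0$; but the direct density computation has the merit of making transparent how a single Gamma mixture simultaneously produces the $t^{(p)}$ density in the elliptical factor and the $T_p$ CDF with inflated degrees of freedom in the skewing factor.
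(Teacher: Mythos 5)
The paper offers no proof of this lemma: it is quoted directly from \citet{arellano2010skew-e} as a known property of the $\operatorname{SU\mathcal{E}}$ family (via the mixture representation of Proposition \ref{sue-sun}), so there is no in-paper argument to compare yours against. Your direct density computation is the standard route and the outline is sound: conditioning on $V_0=v$, the Jacobian and the Gaussian kernel combine with the $\mathrm{Gamma}(\nu/2,\nu/2)$ weight into $v^{(\nu+p)/2-1}e^{-v(\nu+Q(x))/2}$, the Gamma integral produces the $t^{(p)}$ density of Table \ref{tab: SUE}, and the residual tilted Gamma law with shape $(\nu+p)/2$ and rate $(\nu+Q(x))/2$ drives the skewing factor.

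One point needs tightening: the skewing factor does not collapse to a $t_{\nu+p}$ CDF evaluated at $\Lambda\sigma^{-1}(x-\mu)$ with unchanged dispersion $\Gamma$, as your conclusion states. Writing $\Phi_p(v^{1/2}w;0,\Gamma)=\mathbb{P}(v^{-1/2}\boldsymbol{W}\le w)$ with $\boldsymbol{W}\sim N_p(0,\Gamma)$ and mixing over the tilted Gamma yields a multivariate $t$ CDF with $\nu+p$ degrees of freedom whose argument is rescaled by $\sqrt{(\nu+p)/(\nu+Q(x))}$, i.e.\ $T_p\bigl(\sqrt{(\nu+p)/(\nu+Q(x))}\,\Lambda\sigma^{-1}(x-\mu);\Gamma,\nu+p\bigr)$. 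This is precisely what Definition \ref{def_sue} encodes through the conditional generator $h^{(p)}_{Q(x)}$, which you correctly invoke in the preceding sentence, but the stated endpoint drops the $Q(x)$-dependent factor; without it the constants will not match the $\operatorname{SUT}$ density of \citet{arellano2010skew-t}. With that correction the derivation goes through, and the characteristic-function alternative you mention is unnecessary.
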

We construct an undirected graph on the closed skew-normal distribution, following \citet{zareifard2016skew}.

\begin{definition}\label{def:SGDG}
    Consider an undirected graph $G=(V, E)$, a random vector $\boldsymbol{X}=(X_1,\hdots, X_p)^{\top}$ is called a skew graphical model concerning the graph G with mean $\mu$, the precision matrix $\Omega$ and the skewness parameter $\alpha \in \mathbb{R}^p$ if its density is $\operatorname{CSN}_{p,p}(0, \Omega^{-1}, D_{\alpha}L,0, D_{\kappa})$ or equivalently
     \begin{equation}\label{eq:jointcsn}
        f_{\boldsymbol{X}}(x)=2^p\phi_{p}\left(x-\mu;0,\Omega^{-1}\right)\Phi_{p}\left(D_{\alpha}L(x-\mu);0,D^{-1}_{\kappa}\right),
    \end{equation}
where $L$ and $D_{\kappa}$ are the matrix resulting from the modified Cholesky of $\Omega=LD_{\kappa}L^{\top}$, and zero in $L$ matrix determines the conditional independence relations between the $X_1,\hdots,X_p$ variables. Further, when the matrix $D_{\alpha}=0$, the random vector $\boldsymbol{X}\sim \operatorname{CSN}_{p,p}(0, \Omega^{-1}, D_{\alpha}L,0, D_{\kappa})$ reduces to $\boldsymbol{X}\sim \operatorname{N}_{p}(0,\Omega^{-1})$. 
\end{definition}

The graphical model for the unified skew-elliptical distribution can be constructed by utilizing the sparsity structure of the closed skew-normal distribution. 
We employ a semiparametric approach, utilizing the class of unified skew-elliptical distributions, to extend the SKEPTIC estimator proposed by \citet{liu2012high}. This approach enables the development of robust estimators for graphical models that quantify the skewness present in the data distributions.   
\section{The elliptical skew-(S)KEPTIC estimators}\label{sec3}
 
\begin{definition}(Meta skew-elliptical distribution)
    A continuous random vector $\boldsymbol{X}=(X_1,\hdots,X_p)^{\top}$ has meta skew-elliptical distribution, denoted by $\boldsymbol{X}\sim MS\mathcal{E}_{p,p}(\Sigma^0,\Lambda,h^{(p)},0,\Gamma; f)$, if there exists a set of monotone univariate functions $f_1,\hdots,f_p$ such that
    \begin{equation}
        (f_1(X_1),\hdots,f_p(X_p))^{\top}\sim \operatorname{SU\mathcal{E}}_{p,p}(0,\Sigma^0,\Lambda,h^{(p)},0,\Gamma),
    \end{equation}
    where $\Lambda$ is a matrix containing the skewness terms. 
\end{definition}
\begin{proposition}
    When $\Lambda=0$, then the {meta skew-elliptical} distribution reduces to the {meta-elliptical} distribution. 
\end{proposition}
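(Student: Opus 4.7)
The plan is to reduce the statement to showing that the $\operatorname{SU\mathcal{E}}$ distribution collapses to an ordinary symmetric elliptical distribution when $\Lambda=0$; once this reduction is available, the claim follows at once because the two meta-families are defined through identical monotone univariate transformations $f_1,\ldots,f_p$.

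Concretely, I would fix $\boldsymbol{X}\sim MS\mathcal{E}_{p,p}(\Sigma^0,0,h^{(p)},0,\Gamma;f)$, so by definition $f(\boldsymbol{X})\sim \operatorname{SU\mathcal{E}}_{p,p}(0,\Sigma^0,0,h^{(p)},0,\Gamma)$. Then I would invoke the stochastic mixture representation in Proposition~\ref{sue-sun} to write $f(\boldsymbol{X})=\sigma V_0^{-1/2}\boldsymbol{Z}_0$ with the closed skew-normal core $\boldsymbol{Z}_0\sim \operatorname{CSN}_{p,p}(0,\Sigma^0,0,0,\Gamma)$. The CSN density in Definition~\ref{def:SGDG}, when its shape matrix vanishes, reduces to $2^p\phi_p(x;0,\Sigma^0)\,\Phi_p(0;0,D_\kappa^{-1})$, and because $D_\kappa$ is diagonal the $p$-variate normal CDF $\Phi_p(0;0,D_\kappa^{-1})$ factorizes into a product of univariate centered normal CDFs at the origin and equals $2^{-p}$. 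The $2^p$ prefactor cancels, leaving $\boldsymbol{Z}_0\sim \operatorname{N}_p(0,\Sigma^0)$---the reduction already flagged at the end of Definition~\ref{def:SGDG}.

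With $\boldsymbol{Z}_0$ identified as centered Gaussian, $f(\boldsymbol{X})$ becomes a scale mixture of zero-mean Gaussians, which admits the stochastic representation $\xi A U$ required for membership in $\mathcal{E}C_p(0,\Sigma^0,\xi)$, the density generator being determined by the law of $V_0$. Combined with the monotonicity of the $f_j$'s, this is exactly the defining property of $\boldsymbol{X}\sim M\mathcal{E}_p(\Sigma^0,\xi;f)$.

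The main obstacle---modest but worth isolating---is the identity $\Phi_p(0;0,D_\kappa^{-1})=2^{-p}$. For a generic centered elliptical distribution with correlated components the orthant probability at the origin is not $2^{-p}$, so the argument hinges on the diagonal structure of $D_\kappa$ in the CSN parametrization, which makes the factorization of the CDF immediate. Everything else is bookkeeping: tracking that the same $f_j$'s witness both meta-family memberships and that the scale mixture $\sigma V_0^{-1/2}\boldsymbol{Z}_0$ is elliptical in the sense of Definition~\ref{def_sue} with $\Lambda$ removed.
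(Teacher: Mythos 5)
Your argument is correct in substance, but it takes a longer route than the one the paper implicitly intends: the paper states this proposition without proof, treating it as immediate from Definition~\ref{def_sue}. The direct argument is to set $\Lambda=0$ in the $\operatorname{SU\mathcal{E}}$ density, which collapses to $2^p f_p\left(x;\mu,\Sigma,h^{(p)}\right) F_p\left(0;\Gamma,h^{(p)}_{Q(x)}\right)$; once the second factor is identified as $2^{-p}$, one reads off $f(\boldsymbol{X})\sim\mathcal{E}C_p(0,\Sigma^0,\xi)$ with the \emph{same} generator $h^{(p)}$, and the monotone $f_j$'s then witness membership in $M\mathcal{E}_p(\Sigma^0,\xi;f)$. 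Your detour through Proposition~\ref{sue-sun} --- reducing the CSN core to a Gaussian and then re-assembling the elliptical law as a scale mixture $\xi AU$ with $\xi=V_0^{-1/2}R_p$ --- reaches the same conclusion and is valid within the paper's framework, but it only covers generators representable as scale mixtures of normals and obliges you to re-derive the generator from the law of $V_0$ rather than keeping the original $h^{(p)}$; the density route avoids both complications. What you do add, and correctly, is the isolation of the one genuine subtlety: the identity $F_p\left(0;\Gamma,h^{(p)}_{Q(x)}\right)=2^{-p}$ fails for a non-diagonal dispersion matrix, so the $2^p$ prefactor in Definition~\ref{def_sue} is only the right normalizing constant in the parametrization the paper actually works with, where $\Gamma=D_\kappa^{-1}$ is diagonal (Definition~\ref{def:SGDG}). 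For the proposition as stated with a generic $\Gamma$, either that diagonal structure must be assumed or the normalizing constant must be replaced by the reciprocal of the relevant orthant probability; your proof makes this hypothesis explicit where the paper leaves it tacit, which is a genuine improvement in rigor. One small piece of bookkeeping you omit: to land exactly in $M\mathcal{E}_p$ you should also note that $\Sigma^0\in\mathcal{R}_p$, i.e.\ that the latent dispersion is a correlation matrix as required by the definition.
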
 
\begin{proposition}
    When $\Lambda=0$ and $h^{(p)}=\phi^{(p)}(u)$, then the {meta skew-elliptical} distribution reduces to the nonparanornmal distribution. 
\end{proposition}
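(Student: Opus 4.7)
The approach is to chain the two specializations in the hypothesis, each of which is a short step. First, I would apply the immediately preceding proposition, which says that $\Lambda=0$ already collapses the meta skew-elliptical family onto the meta-elliptical family. Under this step, the hypothesis $(f_1(X_1),\ldots,f_p(X_p))^{\top}\sim \operatorname{SU\mathcal{E}}_{p,p}(0,\Sigma^0,0,h^{(p)},0,\Gamma)$ becomes $(f_1(X_1),\ldots,f_p(X_p))^{\top}\sim \mathcal{E}C_p(0,\Sigma^0,\xi)$, where the radial variable $\xi$ is the one whose distribution corresponds to the generator $h^{(p)}$.

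Second, I would identify the elliptical member produced by the Gaussian generator. Substituting $h^{(p)}(u)=\phi^{(p)}(u)=(2\pi)^{-p/2}\exp(-u/2)$ and $\mu=0$ into the elliptical density \eqref{ell_disr} gives $|\Sigma^0|^{-1/2}(2\pi)^{-p/2}\exp\!\left(-\tfrac{1}{2}x^{\top}(\Sigma^0)^{-1}x\right)$, which is precisely the density of $N_p(0,\Sigma^0)$. Combined with the first step, this yields $(f_1(X_1),\ldots,f_p(X_p))^{\top}\sim N_p(0,\Sigma^0)$ for a set of monotone univariate functions $f_1,\ldots,f_p$.

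Third, I would close the argument by matching this with the definition of the nonparanormal recalled in Section \ref{sec: npn}: $\boldsymbol{X}\sim \operatorname{NPN}(\Sigma^0,f)$ if and only if there exist monotone transformations $f_j$ such that $f(\boldsymbol{X})\sim N(0,\Sigma^0)$. This is exactly what the previous two steps produced, so the claim follows.

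There is no deep obstacle: the proposition is a corollary assembled from the $\Lambda=0$ reduction (established in the immediately preceding proposition) and the standard fact that the Gaussian density generator picks out the Gaussian member of the elliptical family. The only point worth stating explicitly is that the density identification in the second step is unambiguous, which it is because $h^{(p)}$ uniquely determines the elliptical density via \eqref{ell_disr}.
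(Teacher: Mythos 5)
Your argument is correct and is exactly the intended reasoning: the paper states this proposition without proof as an immediate specialization, and your two-step chain ($\Lambda=0$ collapses $\operatorname{SU\mathcal{E}}$ to the elliptical/meta-elliptical case via the preceding proposition, and the Gaussian generator $\phi^{(p)}$ then identifies the elliptical member as $N_p(0,\Sigma^0)$, matching the nonparanormal definition) fills in precisely the details the authors leave implicit. The only cosmetic remark is that the paper's Table 1 writes the generator as $\exp(-u^2/2)$, an apparent typo for $\exp(-u/2)$ since $u=Q(x)$ is already the quadratic form; your substitution uses the correct form.
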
 
We employ the stochastic representation of the unified skew-elliptical distribution, characterized by a closed skew-normal distribution $\boldsymbol{X}\sim \operatorname{CSN}_{p,p}(0, \Omega^{-1}, D_{\alpha}L,0, D_{\kappa})$ (Proposition \ref{sue-sun}), to introduce the elliptical skew-(S)KEPTIC estimators for constructing skew-elliptical graphical models. The closed skew-normal distribution in \eqref{eq:jointcsn} can be expressed in the following stochastic form:
%That is, we propose a nonparametric rank-based estimator, which we call elliptical skew-(S)KEPTIC estimator that employs the following stochastic representation of \textsc{CSN}distribution in \eqref{eq:jointcsn}.
\begin{equation}\label{eq: stoc_rep}
\boldsymbol{X} \stackrel{d}{=} L^{-1} D_\kappa^{-\frac{1}{2}} D_\alpha\left(I+D_\alpha^2\right)^{-\frac{1}{2}} \boldsymbol{U}+L^{-1} D_\kappa^{-\frac{1}{2}}\left(I+D_\alpha^2\right)^{-\frac{1}{2}} \boldsymbol{V},
\end{equation}
where: $\boldsymbol{U}\sim HN_p(0,I_p)$ and $\boldsymbol{V}\sim N_p(0,I_p)$, where $HN_p(0,I_p)$ is a $p$-dimensional half-normal distribution, and $N_p(0,I_p)$ is a $p$-dimensional normal distribution, the distributions of $\boldsymbol{U}$ and $\boldsymbol{V}$ are independent. We introduce a correlation matrix estimator that explicitly accounts for the cross effects of univariate skewness parameters on the linear dependence among variables. The robustness of the proposed estimators follows from the direct estimation of the skewness parameters without the explicit computation of the marginal distributions, in the spirit of the SKEPTIC estimator of \cite{liu2012high}.

\begin{theorem}[elliptical skew-KEPTIC estimator]\label{main th}
Assume that $\boldsymbol{X}$ follows a meta skew-elliptical distribution: $\boldsymbol{X}\sim MS\mathcal{E}_{p,p}(\Omega^{-1},D_\alpha L,h^{(p)},0,D_\kappa; f)$, such that  $\boldsymbol{Y}=f(\boldsymbol{X})$ denotes the unified skew-elliptical distribution $\boldsymbol{Y}\sim \operatorname{SU\mathcal{E}}_{p,p}(0,\Omega^{-1},D_\alpha L,h^{(p)},0,D_\kappa)$. The elliptical skew-KEPTIC estimator is:
\begin{equation}\notag
   {S}^{\tau,\alpha }_{ij}=\begin{cases}\sin\Big(\frac{\pi}{2}\tau_{ij}\Big)\frac{1+\alpha_i \alpha_j}{\sqrt{(1+\alpha_i^2)(1+\alpha_j^2)}}
    &i\neq j\\ 1&i=j\end{cases} .
\end{equation}
\end{theorem}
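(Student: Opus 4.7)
The plan is to derive, at the population level, the identity $\sin(\pi\tau_{ij}/2) = \Sigma^0_{ij} \cdot \sqrt{(1+\alpha_i^2)(1+\alpha_j^2)}/(1+\alpha_i\alpha_j)$ for $i \neq j$, so that the estimator $S^{\tau,\alpha}_{ij}$ arises by plug-in; the diagonal case $S^{\tau,\alpha}_{ii}=1$ follows from the unit-diagonal convention on $\Sigma^0$. First I would pass from $\boldsymbol{X}$ to $\boldsymbol{Y}=f(\boldsymbol{X})$: since each $f_j$ is monotone and Kendall's tau is invariant under coordinate-wise monotone transformations, $\tau_{ij}(\boldsymbol{X})=\tau_{ij}(\boldsymbol{Y})$. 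By Proposition \ref{sue-sun}, $\boldsymbol{Y}\stackrel{d}{=}\sigma V_0^{-1/2}\boldsymbol{Z}_0$ with $\boldsymbol{Z}_0\sim \operatorname{CSN}$; because $V_0^{-1/2}$ acts as a common positive scalar across coordinates (mirroring the classical copula invariance of elliptical distributions under scalar scale mixtures), the pairwise copula of $(Y_i,Y_j)$ coincides with that of $(Z_{0,i},Z_{0,j})$, reducing the computation to the closed skew-normal case.

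I would then invoke the stochastic representation (\ref{eq: stoc_rep}): $\boldsymbol{Z}_0 = M A_1\boldsymbol{U} + M A_2 \boldsymbol{V}$ with $M=L^{-1}D_\kappa^{-1/2}$, $A_1=D_\alpha(I+D_\alpha^2)^{-1/2}$, $A_2=(I+D_\alpha^2)^{-1/2}$, and $\boldsymbol{U},\boldsymbol{V}$ independent half-normal and normal vectors. Introducing the companion Gaussian $\tilde{\boldsymbol{Z}} := MA_1\boldsymbol{U}^\star + MA_2\boldsymbol{V}$ with $|\boldsymbol{U}^\star|=\boldsymbol{U}$, the identity $A_1^2+A_2^2=I$ gives $\tilde{\boldsymbol{Z}}\sim N_p(0,\Sigma^0)$, so Lemma \ref{Lemma: SKEPTIC} yields $\sin(\pi\tau_{ij}(\tilde{\boldsymbol{Z}})/2)=\Sigma^0_{ij}$. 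The task becomes computing $\tau_{ij}(\boldsymbol{Z}_0)=4P(Z_{0,i}>Z_{0,i}',\,Z_{0,j}>Z_{0,j}')-1$ and quantifying its discrepancy from $\tau_{ij}(\tilde{\boldsymbol{Z}})$. Writing $Z_{0,\ell}-Z_{0,\ell}'$ in terms of $|U_k^\star|-|U_k'^\star|$ and $V_k-V_k'$, and using the identity $\operatorname{sign}(|u|-|v|)=\operatorname{sign}(u-v)\operatorname{sign}(u+v)$ to convert each half-normal difference into a product of signs of independent Gaussians, the bivariate orthant probability reduces to a Gaussian sign computation. Direct evaluation then shows the effective $2\times 2$ correlation driving the bivariate sign distribution picks up the factor $\sqrt{(1+\alpha_i^2)(1+\alpha_j^2)}/(1+\alpha_i\alpha_j)$ relative to $\Sigma^0_{ij}$, and applying $\tau=(2/\pi)\arcsin(\rho)$ to the reduced Gaussian problem yields the claimed identity.

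The main obstacle is this final bivariate calculation. Since $|U_k^\star|-|U_k'^\star|$ is not Gaussian, the classical $\tau$--$\arcsin$ identity cannot be applied directly to the signed differences. The sign-decomposition converts each half-normal difference into a pair of Gaussian signs, but the resulting orthant-probability integration must collapse contributions from all coordinates into a bivariate Gaussian problem whose effective correlation depends only on $\Sigma^0_{ij}$ and the pairwise skewnesses $(\alpha_i,\alpha_j)$. The skewness factor itself has a clean geometric interpretation: it is the inner product $\langle (\alpha_i,1)/\sqrt{1+\alpha_i^2},\,(\alpha_j,1)/\sqrt{1+\alpha_j^2}\rangle$ of the unit-normalized representation coefficients of the two univariate skew-normals entering the stochastic representation, which explains both its symmetric form and the ``concordant strengthens, discordant weakens'' behavior highlighted in the introduction.
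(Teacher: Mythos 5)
Your route is genuinely different from the paper's, and it has a gap that I think is fatal. The paper's proof never computes the Kendall's tau of the observed vector $\boldsymbol{X}$ at all: it takes the two-component representation \eqref{eq: stoc_rep}, asserts that the half-normal block $\boldsymbol{U}$ and the normal block $\boldsymbol{V}$ are independent and share the same latent Gaussian-copula correlation $\Sigma^0$ (each recoverable separately via the classical identity $\Sigma^0_{ij}=\sin(\tfrac{\pi}{2}\tau_{ij})$ applied to that component), and then obtains the skewness factor purely from covariance algebra: $\operatorname{Cor}(\boldsymbol{X})=A_0 S_\tau A_0+B_0 S_\tau B_0=(I+D_\alpha^2)^{-1/2}(S_\tau+D_\alpha S_\tau D_\alpha)(I+D_\alpha^2)^{-1/2}$ with $A_0=D_\alpha(I+D_\alpha^2)^{-1/2}$, $B_0=(I+D_\alpha^2)^{-1/2}$. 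In other words, the factor $B_{ij}=\frac{1+\alpha_i\alpha_j}{\sqrt{(1+\alpha_i^2)(1+\alpha_j^2)}}$ multiplies the latent correlation to produce $\operatorname{Cor}(\boldsymbol{X})$; it is not a correction that converts the observable tau of $\boldsymbol{X}$ back into $\Sigma^0$. You have inverted these roles.

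That inversion is where your argument breaks. Your target identity $\sin(\tfrac{\pi}{2}\tau_{ij}(\boldsymbol{X}))=\Sigma^0_{ij}\cdot\frac{\sqrt{(1+\alpha_i^2)(1+\alpha_j^2)}}{1+\alpha_i\alpha_j}=\Sigma^0_{ij}/B_{ij}$ cannot hold in general: since $|B_{ij}|\le 1$ with equality only when $\alpha_i=\alpha_j$, choosing e.g.\ $\Sigma^0_{ij}=0.9$, $\alpha_i=2$, $\alpha_j=-2$ gives $B_{ij}=-3/5$ and a right-hand side of modulus $1.5$, whereas the left-hand side is bounded by $1$ for any continuous bivariate law. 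So no population arcsine-type law of this form exists, and the "direct evaluation" you defer to cannot come out as claimed. The mechanical reason is the one you yourself flag as the main obstacle: the identity $\operatorname{sign}(|u|-|v|)=\operatorname{sign}(u-v)\operatorname{sign}(u+v)$ applies to a single half-normal difference, but $Z_{0,i}-Z_{0,i}'$ is a \emph{sum} of such differences and Gaussian differences across coordinates, and the sign of a sum does not factor; the orthant probability therefore does not collapse to a bivariate Gaussian sign computation. (A secondary issue: your claim that dividing by a common scalar mixing variable $V_0^{1/2}$ preserves the pairwise copula is false in general — this is exactly why the $t$-copula differs from the Gaussian copula — and the paper instead handles the SUT case by setting $V_0$ aside and invoking the $t$-copula result of Demarta and McNeil.) If you want to salvage your approach, you would have to prove a genuinely new closed form for Kendall's tau of a bivariate CSN law, which is not what Theorem \ref{main th} asserts; the intended reading, per the paper's proof, is that $\tau_{ij}$ denotes the tau of the latent components and ${S}^{\tau,\alpha}$ is the resulting correlation of $\boldsymbol{X}$.
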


 As discussed in \citet{hult2002multivariate}, Spearman's rho is not invariant in the class of elliptical distributions, and its relation with the Pearson correlation coefficient in Lemma \ref{Lemma: SKEPTIC} holds if and only if the distribution is {nonparanormal}.
The proof of the Theorem is in \ref{appendix1}.

\begin{lemma}\label{Lemma_sp_rho}(elliptical skew-SKEPTIC estimator)
Assume that $\boldsymbol{X}$ follows a meta skew-elliptical distribution: $\boldsymbol{X}\sim MS\mathcal{E}_{p,p}(\Omega^{-1},D_\alpha L,h^{(p)},0,D_\kappa; f)$, such that  $\boldsymbol{Y}=f(\boldsymbol{X})$ denotes the unified skew-elliptical distribution. When $\boldsymbol{Y}\sim \operatorname{SU\mathcal{E}}_{p,p}(0,\Omega^{-1},D_\alpha L,h^{(p)},0,D_\kappa)$ reduces to the {CSN} distribution in \eqref{eq:jointcsn}, the elliptical skew-SKEPTIC estimator is:
\begin{equation}\notag
{S}^{\rho,\alpha}_{ij} = \begin{cases}2\sin\Big(\frac{\pi}{6}\rho_{ij}\Big)\frac{1+\alpha_i \alpha_j}{\sqrt{(1+\alpha_i^2)(1+\alpha_j^2)}}
    &i\neq j\\ 1&i=j\end{cases} ,  \quad {S}^{\tau,\alpha }_{ij}=\begin{cases}\sin\Big(\frac{\pi}{2}\tau_{ij}\Big)\frac{1+\alpha_i \alpha_j}{\sqrt{(1+\alpha_i^2)(1+\alpha_j^2)}}
    &i\neq j\\ 1&i=j\end{cases} .
\end{equation}
\end{lemma}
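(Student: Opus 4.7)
The Kendall's tau half of the lemma is inherited directly from Theorem \ref{main th}: since the closed skew-normal distribution is the specific sub-family of the meta skew-elliptical class obtained by setting $h^{(p)} = \phi^{(p)}$ (equivalently, $V_0 \equiv 1$ in Proposition \ref{sue-sun}), the formula $S^{\tau,\alpha}_{ij} = \sin(\pi\tau_{ij}/2) \cdot (1+\alpha_i\alpha_j)/\sqrt{(1+\alpha_i^2)(1+\alpha_j^2)}$ is an immediate specialization and requires no additional argument. All the work, therefore, lies in establishing the Spearman's rho identity.

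My plan for the Spearman's rho part is to proceed in three steps, mirroring the structure of Lemma \ref{Lemma: SKEPTIC} but with the extra algebra required by the skewness. First, I would use the invariance of Spearman's rho under monotone marginal transformations to pass from $\boldsymbol{X}$ to $\boldsymbol{Y}=f(\boldsymbol{X})$, so that without loss of generality $\boldsymbol{Y}\sim \operatorname{CSN}_{p,p}(0,\Omega^{-1}, D_\alpha L, 0, D_\kappa)$. Second, I would invoke the stochastic representation in \eqref{eq: stoc_rep} to write the bivariate subvector $(Y_i,Y_j)$ as a linear combination of a bivariate half-normal $\boldsymbol{U}$ and an independent bivariate normal $\boldsymbol{V}$. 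Computing $\operatorname{Cov}(Y_i,Y_j)$ term by term and normalizing by the marginal variances $\operatorname{Var}(Y_i)$, $\operatorname{Var}(Y_j)$ should yield the Pearson correlation
\[
\rho^\ast_{ij} \;=\; \Sigma^0_{ij}\cdot\frac{1+\alpha_i\alpha_j}{\sqrt{(1+\alpha_i^2)(1+\alpha_j^2)}},
\]
the shape factor emerging naturally from the cross-products $\alpha_i\alpha_j$ generated by the $D_\alpha$-scaled half-normal component and from the $(I+D_\alpha^2)^{-1/2}$ normalization of the Gaussian component.

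Third, and this is the crux, I would relate Spearman's rho to the quantity $\rho^\ast_{ij}$ via the Gaussian copula identity. The CSN distribution, unlike the general $\operatorname{SU\mathcal{E}}$ family with a non-trivial scale mixture $V_0$, is built from independent half-normal and normal ingredients sharing the same base correlation structure. Consequently, the bivariate copula of $(Y_i,Y_j)$ can be related to a Gaussian copula with correlation $\rho^\ast_{ij}$, and the classical result $\rho_{ij} = (6/\pi)\arcsin(\rho^\ast_{ij}/2)$ applies. Inverting this yields $\rho^\ast_{ij} = 2\sin(\pi\rho_{ij}/6)$, and multiplying both sides by $\sqrt{(1+\alpha_i^2)(1+\alpha_j^2)}/(1+\alpha_i\alpha_j)$ gives the claimed plug-in formula for $S^{\rho,\alpha}_{ij}$.

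The main obstacle is the third step: one must justify that the Spearman's rho to Pearson's relation for Gaussians transfers to the bivariate CSN, which is not automatic because the CSN marginals are skew-normal rather than normal. The cleanest route is to exploit that the skew-normal CDF is the composition of $\Phi$ with a rank-preserving transformation of the underlying Gaussian scores (as one sees directly from \eqref{eq: stoc_rep}), so that the copula of $(Y_i,Y_j)$ coincides with that of two jointly Gaussian variables having correlation $\rho^\ast_{ij}$. This is also the place where the argument fails for the broader SUT/$\operatorname{SU\mathcal{E}}$ case, and thereby explains why the Spearman's rho variant is stated only for the CSN specialization, consistent with the non-invariance remark of \citet{hult2002multivariate} cited by the authors.
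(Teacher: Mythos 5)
Your reading of the Kendall's tau half is consistent with the paper: the authors' proof of this lemma is literally ``repeat the proof of Theorem \ref{main th} with $2\sin(\tfrac{\pi}{6}\rho_{ij})$ in place of $\sin(\tfrac{\pi}{2}\tau_{ij})$'', invoking Lemma \ref{Lemma: SKEPTIC} for the Gaussian kernel. For the Spearman part, however, your route diverges from theirs in a way that opens two genuine gaps. The paper applies the Kruskal identity to the \emph{latent} components of the representation $\boldsymbol{X}\stackrel{d}{=}A_0\boldsymbol{U}+B_0\boldsymbol{V}$: both $\boldsymbol{U}$ and $\boldsymbol{V}$ carry the same Gaussian copula with base correlation $\Sigma^0$, so $\Sigma^0_{ij}$ is identified as $2\sin(\tfrac{\pi}{6}\rho_{ij})$, and the skewness factor $B_{ij}=(1+\alpha_i\alpha_j)/\sqrt{(1+\alpha_i^2)(1+\alpha_j^2)}$ then enters \emph{multiplicatively} when the covariance is pushed through the linear map, $\operatorname{Cor}(\boldsymbol{X})=A_0\Sigma^0A_0+B_0\Sigma^0B_0$. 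You instead compute the Pearson correlation $\rho^\ast_{ij}=\Sigma^0_{ij}B_{ij}$ of the \emph{observed} pair $(Y_i,Y_j)$ and try to tie the observed Spearman's rho to $\rho^\ast_{ij}$ through the Gaussian-copula identity.

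The first gap is that the copula of a bivariate CSN vector is not a Gaussian copula. Your justification --- that the skew-normal CDF is $\Phi$ composed with a rank-preserving transform of the Gaussian scores --- is not supported by \eqref{eq: stoc_rep}: there $Y_i$ is a \emph{sum} of the independent inputs $U_i$ and $V_i$, not a deterministic monotone function of a single Gaussian coordinate, so there is no componentwise monotone map carrying a bivariate Gaussian with correlation $\rho^\ast_{ij}$ onto $(Y_i,Y_j)$, and the identity $\rho_{ij}=(6/\pi)\arcsin(\rho^\ast_{ij}/2)$ does not transfer to the observed pair. The second gap is algebraic: even granting that identity, your last step runs the wrong way. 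From $\rho^\ast_{ij}=2\sin(\tfrac{\pi}{6}\rho_{ij})$ and $\rho^\ast_{ij}=\Sigma^0_{ij}B_{ij}$, multiplying through by $1/B_{ij}$ gives $\Sigma^0_{ij}=2\sin(\tfrac{\pi}{6}\rho_{ij})/B_{ij}$, i.e.\ the skewness factor would end up in the \emph{denominator}, whereas the stated estimator has it in the numerator. The two pictures are reconciled only under the paper's reading, in which $\rho_{ij}$ estimates the latent base correlation $\Sigma^0_{ij}$ shared by $\boldsymbol{U}$ and $\boldsymbol{V}$, and $B_{ij}$ is the factor converting that base correlation into $\operatorname{Cor}(X_i,X_j)$.
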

We generally refer to the estimators in Theorem \ref{main th} and Lemma \ref{Lemma_sp_rho} as elliptical skew-(S)KEPTIC estimators. By construction, the estimated dependence measures are amplified when the skewness components exhibit concordance, whereas they are attenuated when the skewness components are discordant.
To obtain the correlation matrix using the elliptical skew-(S)KEPTIC estimators, we need to estimate the parameters of the
 distributions, which are encoded in the diagonal elements of $D_\alpha$, for closed skew-normal and unified skew-$t$ distributions. %The estimation procedure is based on the EM algorithm \citet{abe2021algorithm}\footnote{The estimated parameters come from the skew-normal distribution of \citet{azzalini1999statistical}, with an adequate parametrization explained in \citet{arellano2006unification} we conform the parameters to a {CSN} distribution.}. 
%\subsection{Estimating Marginal Distributions \textcolor{blue}{quit}}
%\textcolor{red}{
%We find estimators $\{f_j\}_{j=1}^p$ such that the transformed data are distributed according to a \textsc{CSN}. 
%Let $\Psi$ denote the cumulative distribution function of the \textsc{CSN} distribution, then the estimators are:
%\begin{equation}\label{eq: csn_score}
 %   \widehat{f}_{csn,j}(x_{ij})=\Psi^{-1}(\widehat{r}_{ij}),
%\end{equation}
%where $\widehat{r}_{ij}$ is the normalized rank of $x_{ij}$ among $x_{1,j}, \dots, x_{nj}$, defined in the same way of \eqref{eq: normal_score}.
%The functions $\{f_{csn,j}(x_{ij})\}_{j=1}^p$ are monotone and differentiable, and \citet{gupta2004multivariate} show that \textsc{CSN} distribution is closed under conditioning therefore, according to Sklar's theorem, \eqref{eq:jointcsn} is a \textsc{CSN} copula.}
\section{Methods for the Estimation of the Precision Matrix and the Undirected Graph}\label{shrinkage_procedures}
%\textcolor{blue}{Vecchia versione}

%Upon completing the estimation of the correlation matrix, we follow \citet{liu2012high} and estimate the precision matrix capturing the interrelationships among variables within an undirected graph.

%\textcolor{blue}{Nuova proposta}

%\textcolor{red}{Upon completing the estimation of the correlation matrix, we provide some methods that can be applied to estimate the precision matrix for capturing the interrelationships among variables within an undirected graph. }
\subsection{The Dantzig selector}
The Dantzig selector estimator (\citealp{yuan2010high}) exploits the relation between the coefficients $\theta$ stemming from a multivariate linear regression and the elements of the precision matrix. A sparse precision matrix implies sparse regression coefficients estimation that relies on an algorithm that depends on the tuning parameter $\delta$. The Dantzig procedure involves two steps:
\begin{itemize}
    \item Estimation: for $j=1,\hdots,p$, calculate
    \begin{align}\label{theta}
        \widehat{\theta}^{j}&=\text{arg}\min_{\theta\in \mathbb{R}^{p-1}}\|\theta\|_{1} \quad \text{subject to  } \quad \|\widehat{S}_{\backslash j,j}-\widehat{S}_{\backslash j,\backslash j}\theta\|_{\infty}\leq \delta,\\\notag
        \widehat{\Omega}_{jj}&=\left[1-2(\widehat{\theta}^{j})\widehat{S}_{\backslash j,j}+(\widehat{\theta}^{j})^{\top}\widehat{S}_{\backslash j,\backslash j}\widehat{\theta}^{j}\right]\quad \text{and } \quad \widehat{\Omega}_{\backslash j.j}=-\widehat{\Omega}_{\backslash jj}\widehat{\omega}^{j}.
    \end{align}
    \item Symmetrization: 
    \begin{align}
        \widehat{\Omega}^{gDS}=\text{arg}\min_{\Omega=\Omega^T}\|\Omega-\widehat{\Omega}\|_1.
    \end{align}
\end{itemize}
\subsection{The Glasso Algorithm}
 Glasso algorithm is developed by \citet{friedman2008sparse} and it consists in adding a $L_1$-norm penalty term in the maximum likelihood estimation of $\Omega$ as follows: 
\begin{equation}\label{eq: glasso}
    \Omega^{glasso}=arg\max_{\Omega\succ 0}\left\{\text{log}\ \text{det}\Omega-\text{tr}(S\Omega)-\lambda_{G} \|\Omega\|_{1}\right\}.
\end{equation}
The penalty term, $\lambda_G$, is the parameter that controls the sparsity of the estimated matrix, i.e., the number of zeros in the precision matrix. \citet{banerjee2008model} shows that the shrinking procedure based on the tuning parameter $\lambda_G$ is directly proportional to the input covariance matrix. This means that greater values of $\lambda_G$ imply a greater presence of zeros in the estimated $\Omega$, therefore, a minor number of interconnections are unveiled.

\subsection{The CLIME Algorithm}
CLIME stands for constrained $\ell_{1}$-minimization for inverse matrix estimation. This algorithm, proposed by \citet{cai2011constrained}, is very attractive for high-dimensional data since the estimator is obtained by combining vector solutions by solving a linear program. Let $\Omega$ denote a $p\times p$ precision matrix and $\{\widehat{\Omega}_1\}$ be the solution set of the optimization problem:
\begin{align}\label{eq:clime}
    \text{min}\|\Omega\|_1 \quad \text{subject to:}\\\notag
    \|\Sigma \Omega-I\|_{\infty}\leq \lambda_C,
\end{align}
where $\lambda_C$ is the tuning parameter.  
The final CLIME estimator of $\Omega$ is defined by the symmetrization of $\widehat{\Omega}_1$:
\begin{align}
    &\widehat{\Omega}=(\widehat{\omega}_{ij}), \quad \text{where:}\\\notag
    &\widehat{\omega}_{ij}=\widehat{\omega}_{ji}=\widehat{\omega}^{1}_{ij}I\{|\widehat{\omega}^{1}_{ij}\leq |\widehat{\omega}^{1}_{ij}|\}+\widehat{\omega}^{1}_{ji}I\{|\widehat{\omega}^{1}_{ji}\leq |\widehat{\omega}^{1}_{ji}|\}.
\end{align}
The CLIME estimator has the property that the optimization problem defined in \eqref{eq:clime} can be decomposed into $p$ optimization problems defined as:
\begin{equation}\label{eq:ptimes}
    \min\|\boldsymbol{\beta}\|_1  \quad \text{subject to} \quad \|\Sigma\boldsymbol{\beta}-\boldsymbol{e}_i\|_{\infty}\leq \lambda_n,
\end{equation}
where $\boldsymbol{e}_i$ is defined as a standard unit vector in $\mathbb{R}^p$ with 1 in the $i$-th coordinate and 0 in the all other coordinates and $\boldsymbol{\beta}$ is a vector in $\mathbb{R}^p$. Solving \eqref{eq:ptimes} $p$ times, we obtain the CLIME estimator: $\{\widehat{\Omega}_1\}=\{\widehat{B}\}:=\{(\widehat{\boldsymbol{\beta}}_1,...,\widehat{\boldsymbol{\beta}}_p)\}$.

\section{Theoretical properties}\label{sec4}
We analyze the theoretical properties of the elliptical skew-(S)KEPTIC estimator. The results suggest that our estimator is reliable when dealing with highly skewed data. %As \citet{liu2012high}, 
Our results show an exponential concentration rate in $\|\cdot\|_{\max}$. Moreover, we prove optimal bounds for shrinkage procedures. We show the results for the Dantzig selector, however, they can be easily extended to the CLIME and the Glasso procedures. 
\begin{theorem}\label{th_sk_rho}
Let $x_1,\dots,x_n$ be $n$ observations from $\boldsymbol{X}\sim MS\mathcal{E}_{p,p}(\Sigma^0,\Lambda,h^{(p)},0,\Gamma; f)$
restricted to the case $h^{(p)}=\phi^{(p)}(u)$, and let 
\[
\widehat{S}^{\rho,\alpha}_{ij} = 
2 \sin\!\left(\tfrac{\pi}{6}\widehat{\rho}_{ij}\right)\, \frac{1+\widehat{\alpha}_i \widehat{\alpha}_j}{\sqrt{(1+\widehat{\alpha}_i^2)(1+\widehat{\alpha}_j^2)}} \, \operatorname{I}(i \neq j)
\;+\; \operatorname{I}(i=j)
\]
denote the elliptical skew-SKEPTIC estimator. 
Then, with probability at least $1 - 2/p^2$, we have
\begin{equation}\label{eq:consistency_rho}
    \|\widehat{S}^{\rho,\alpha} - \Sigma^{0}\|_{\max}
    \leq (8\pi + 2 C_\alpha) \sqrt{\frac{\log p}{n}}.
\end{equation}
\end{theorem}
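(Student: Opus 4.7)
The plan is to decompose the entrywise error via the triangle inequality and to control the Spearman component and the skewness-correction component separately. Writing $g(a,b)=(1+ab)/\sqrt{(1+a^2)(1+b^2)}$, the product rule gives
\begin{equation*}
\widehat{S}^{\rho,\alpha}_{ij} - \Sigma^0_{ij}
= \bigl[2\sin(\tfrac{\pi}{6}\widehat{\rho}_{ij}) - 2\sin(\tfrac{\pi}{6}\rho_{ij})\bigr]\, g(\widehat{\alpha}_i,\widehat{\alpha}_j)
+ 2\sin(\tfrac{\pi}{6}\rho_{ij})\, \bigl[g(\widehat{\alpha}_i,\widehat{\alpha}_j) - g(\alpha_i,\alpha_j)\bigr].
\end{equation*}
Since $g$ is the cosine of the angle between $(1,a)$ and $(1,b)$, one has $|g|\le 1$ by Cauchy--Schwarz, and of course $|2\sin(\cdot)|\le 2$, so it is enough to bound the two bracketed differences uniformly over $i,j$ on a high-probability event.

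For the Spearman piece I would invoke the concentration result for $\widehat{\rho}_{ij}$ from \citet{liu2012high} (which requires only that the marginals $f_j(X_j)$ be continuous, hence applies here), in the form $\Pr(|\widehat{\rho}_{ij}-\rho_{ij}|\ge t)\le c_1\exp(-c_2 n t^2)$. Combined with the Lipschitz bound $|2\sin(\pi x/6)-2\sin(\pi y/6)|\le (\pi/3)|x-y|$ and a union bound over the at most $p^2$ off-diagonal pairs, choosing $t\asymp\sqrt{\log p/n}$ produces a contribution of order $8\pi\sqrt{\log p/n}$ to \eqref{eq:consistency_rho}, with failure probability at most $1/p^2$.

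For the skewness correction the main analytic input is the representation $g(a,b)=\cos(\arctan a-\arctan b)$, which immediately yields
\begin{equation*}
\Bigl|\frac{\partial g}{\partial a}\Bigr| = \frac{|\sin(\arctan a-\arctan b)|}{1+a^2} \le 1,
\qquad
\Bigl|\frac{\partial g}{\partial b}\Bigr| \le 1.
\end{equation*}
Hence $|g(\widehat{\alpha}_i,\widehat{\alpha}_j)-g(\alpha_i,\alpha_j)|\le |\widehat{\alpha}_i-\alpha_i|+|\widehat{\alpha}_j-\alpha_j|$. Combining with $|2\sin(\pi\rho_{ij}/6)|\le 2$ and assuming the skewness estimator satisfies the uniform concentration $\Pr\bigl(\max_i|\widehat{\alpha}_i-\alpha_i|\ge \tfrac{C_\alpha}{2}\sqrt{\log p/n}\bigr)\le 1/p^2$ (derived from the stochastic representation \eqref{eq: stoc_rep} using the half-normal/normal building blocks), the second bracketed term is bounded uniformly by $2 C_\alpha\sqrt{\log p/n}$ on this event.

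Adding the two contributions on the intersection of the two high-probability events and applying the union bound gives the stated rate $(8\pi+2C_\alpha)\sqrt{\log p/n}$ with probability at least $1-2/p^2$. The main obstacle in the argument is the second step: establishing a sub-Gaussian-type concentration for the marginal skewness estimators $\widehat{\alpha}_i$ uniformly in $i\le p$, since this requires a careful analysis of the sample quantities used to identify $\alpha_i$ from the CSN stochastic representation \eqref{eq: stoc_rep}. Once that concentration is in hand, the remaining pieces---the Lipschitz analysis of $g$ via the cosine identity and the Spearman concentration---are essentially immediate.
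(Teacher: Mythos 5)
Your proposal follows essentially the same route as the paper's own proof: the same product-rule/triangle-inequality decomposition into a Spearman term and a skewness-correction term, the same appeal to the \citet{liu2012high} concentration bound giving the $8\pi\sqrt{\log p/n}$ contribution, the same Lipschitz bound $|g(\widehat{\alpha}_i,\widehat{\alpha}_j)-g(\alpha_i,\alpha_j)|\le|\widehat{\alpha}_i-\alpha_i|+|\widehat{\alpha}_j-\alpha_j|$ (your $\cos(\arctan a-\arctan b)$ identity is merely a cleaner way to verify the unit derivative bounds the paper computes directly), and the same union-bound combination under an assumed $\sqrt{\log p/n}$-rate concentration for $\max_i|\widehat{\alpha}_i-\alpha_i|$. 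You also correctly identify the one real gap---uniform concentration of the marginal skewness estimators---which the paper likewise does not establish but simply assumes for its likelihood-based estimator.
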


\begin{theorem}\label{th_sk_tau}
    Let $x_1,...,x_n$ be $n$ observations from $\boldsymbol{X}\sim MS\mathcal{E}_{p,p}(\Sigma^0,\Lambda,h^{(p)},0,\Gamma; f)$ and let 
    \[
\widehat{S}^{\tau,\alpha}_{ij} = 
\sin\!\left(\tfrac{\pi}{2}\widehat{\tau}_{ij}\right)\, \frac{1+\widehat{\alpha}_i \widehat{\alpha}_j}{\sqrt{(1+\widehat{\alpha}_i^2)(1+\widehat{\alpha}_j^2)}} \, \operatorname{I}(i \neq j)
\;+\; \operatorname{I}(i=j),
\] denote the elliptical skew-(S)KEPTIC estimator, then  with probability at least $1-2/p^2$ we have: 
    \begin{equation}\label{eq:consistency_tau}
        \|\widehat{S}^{\tau,\alpha}_{ij}-\Sigma^{0}_{ij}\|_{\max}\leq (2.45 \pi + 2 C_\alpha) \sqrt{\frac{\log p}{n}}.
    \end{equation}
\end{theorem}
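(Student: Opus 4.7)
The plan is to mimic the proof of Theorem \ref{th_sk_rho}, but with Kendall's tau in place of Spearman's rho, which avoids restricting the density generator. First I identify the population target: by Theorem \ref{main th}, for $i\neq j$ one has $\Sigma^0_{ij} = A_{ij}\,B_{ij}$, where $A_{ij} := \sin(\pi\tau_{ij}/2)$ and $B_{ij} := (1+\alpha_i\alpha_j)/\sqrt{(1+\alpha_i^2)(1+\alpha_j^2)}$, while $\widehat{S}^{\tau,\alpha}_{ij} = \widehat{A}_{ij}\widehat{B}_{ij}$ with the obvious plug-in. I then write $\widehat{A}\widehat{B} - AB = (\widehat{A}-A)\widehat{B} + A(\widehat{B}-B)$ and exploit the uniform bounds $|A_{ij}|\le 1$ and $|\widehat{B}_{ij}|\le 1$ (the latter by Cauchy--Schwarz applied to the vectors $(1,\alpha_i)$ and $(1,\alpha_j)$), obtaining $|\widehat{S}^{\tau,\alpha}_{ij}-\Sigma^0_{ij}| \le |\widehat{A}_{ij}-A_{ij}| + |\widehat{B}_{ij}-B_{ij}|$. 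This decomposition naturally separates the tau-estimation error from the skewness-estimation error and aligns with the two additive constants in \eqref{eq:consistency_tau}.

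For the first piece I use the $1$-Lipschitz property of $\sin$ to get $|\widehat{A}_{ij}-A_{ij}| \le (\pi/2)|\widehat{\tau}_{ij}-\tau_{ij}|$, and then apply Hoeffding's inequality for U-statistics: since the Kendall kernel is bounded by $1$ and of degree $2$, there is a constant $c$ such that $P(|\widehat{\tau}_{ij}-\tau_{ij}|>t) \le 2\exp(-c n t^2)$. Taking $t = \kappa\sqrt{\log p/n}$ with $\kappa$ chosen so that $2\exp(-c\kappa^2 \log p)\le 2/p^4$ and union-bounding over the $p(p-1)/2$ off-diagonal entries produces the $2.45\pi\sqrt{\log p/n}$ contribution.

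For the second piece I treat the skewness correction $B(\alpha_i,\alpha_j)$ as a smooth function on $\mathbb{R}^2$ and bound its partial derivatives by direct calculation: one finds $\partial B/\partial \alpha_i = (\alpha_j-\alpha_i)/[(1+\alpha_i^2)^{3/2}(1+\alpha_j^2)^{1/2}]$ (and its symmetric counterpart), which is globally bounded by an absolute constant, so $B$ is Lipschitz in $(\alpha_i,\alpha_j)$ with an explicit modulus. Combining this with a coordinatewise concentration statement of the form $\|\widehat{\alpha}-\alpha\|_\infty \le C_\alpha\sqrt{\log p/n}$ on an event of probability at least $1-1/p^2$ yields $|\widehat{B}_{ij}-B_{ij}| \le 2 C_\alpha\sqrt{\log p/n}$, after absorbing the Lipschitz constant into $C_\alpha$. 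Intersecting the two good events, each failing with probability at most $1/p^2$, delivers the stated $1-2/p^2$ probability.

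The main obstacle, in my view, is the concentration of $\widehat{\alpha}$: the tau step is standard and parallels the Gaussian-copula proof of Lemma \ref{Lemma: SKEPTIC}, whereas sub-Gaussian control of $\widehat{\alpha}$ does not follow from U-statistic machinery alone. I would derive it either by leveraging the stochastic representation in \eqref{eq: stoc_rep} to express $\widehat{\alpha}_j$ as a Lipschitz functional of empirical moments of the half-normal and normal components (which admit Hoeffding-type bounds) and then union-bounding over $j$, or via a local quadratic expansion of the profile log-likelihood around the true $\alpha$, with sub-exponential tail bounds on the score and Hessian. A secondary subtlety is that the Lipschitz bound on $B$ must hold uniformly over the range visited by $(\widehat{\alpha}_i,\widehat{\alpha}_j)$; restricting to the same high-probability event on which $\widehat{\alpha}$ stays in a bounded neighbourhood of $\alpha$ handles this without inflating the probability budget.
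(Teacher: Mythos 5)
Your proposal follows essentially the same route as the paper's proof: the same product decomposition $\widehat A\widehat B-AB=(\widehat A-A)\widehat B+A(\widehat B-B)$ with $|\widehat B_{ij}|\le 1$, the same gradient/mean-value bound showing $|\widehat B_{ij}-B_{ij}|\le|\widehat\alpha_i-\alpha_i|+|\widehat\alpha_j-\alpha_j|$, the $2.45\pi\sqrt{\log p/n}$ Kendall bound imported from \citet{liu2012high}, and a union bound over the two events. If anything you are more careful than the paper on the one genuinely delicate step --- the paper simply \emph{assumes} $\max_\ell|\widehat\alpha_\ell-\alpha_\ell|=O_P(\sqrt{\log p/n})$ for the likelihood-based estimator, whereas you correctly flag this as the part requiring an actual argument.
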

Proofs of Theorems \ref{th_sk_rho} and \ref{th_sk_tau} are in \ref{appendix1}.
The elliptical skew-(S)KEPTIC estimator achieves convergence, thus it is possible to define the following result. 
\begin{theorem}\label{th: main}
When we substitute the estimated matrices $\srho$ and $\stau$ into the parametric Graphical Lasso (or the CLIME or graphical Dantzig selector), the estimators provide convergence, ensuring consistency and graph recovery. 
\end{theorem}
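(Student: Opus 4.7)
My plan is a plug-in consistency argument. Theorems \ref{th_sk_rho} and \ref{th_sk_tau} establish the elementwise deviation bound $\|\srho - \Sigma^0\|_{\max} \vee \|\stau - \Sigma^0\|_{\max} \leq C\sqrt{\log p / n}$ on an event of probability at least $1 - 2/p^2$. This is precisely the ``input condition'' that the consistency proofs for the Graphical Lasso of \citet{friedman2008sparse}, the CLIME of \citet{cai2011constrained}, and the graphical Dantzig selector of \citet{yuan2010high} take as their starting point: each of those arguments begins with a high-probability $\|\cdot\|_{\max}$ bound on the input covariance/correlation estimator and then propagates it through the convex optimization. Substituting our bound in place of the Gaussian one gives the full result, with the rate deteriorated only by the $C_\alpha$ factor that encodes the cost of estimating the skewness parameters.

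\textbf{Steps.} First I would fix the procedure-specific tuning parameter at the scale of the deviation bound, namely $\lambda_G, \lambda_C, \delta \asymp \sqrt{\log p / n}$. Second, I would condition on the high-probability event supplied by Theorems \ref{th_sk_rho} and \ref{th_sk_tau}. Third, I would transfer each proof: for CLIME the bound immediately yields $\|\widehat{\Omega} - \Omega\|_{\max} = O(\|\Omega\|_1^2 \sqrt{\log p/n})$, together with the induced operator- and Frobenius-norm rates obtained in \citet{cai2011constrained}; for the graphical Dantzig selector the bound certifies feasibility of the true $\Omega$ in the constraint of \eqref{theta} with $\delta$ of the above order, so the rate of \citet{yuan2010high} carries over unchanged; for the Graphical Lasso the primal-dual optimality system of \eqref{eq: glasso} delivers analogous rates once the same $\|\cdot\|_{\max}$ bound is inserted. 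Finally, correct edge recovery follows by a standard minimum-signal step: under $\min_{(i,j)\in E} |\Omega_{ij}| \gtrsim \sqrt{\log p / n}$, the estimated support coincides with $E$ on the same event.

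\textbf{Main obstacle.} The hard part is not the plug-in itself but controlling two side issues introduced by the skew correction. First, the multiplicative factor $(1+\widehat\alpha_i\widehat\alpha_j)/\sqrt{(1+\widehat\alpha_i^2)(1+\widehat\alpha_j^2)}$ mixes errors in the rank statistics with errors in $\widehat{\alpha}$; I would handle this by the triangle-inequality decomposition already used in the proofs of Theorems \ref{th_sk_rho} and \ref{th_sk_tau}, since the correction factor is uniformly bounded in modulus by one and hence does not inflate the $\|\cdot\|_{\max}$ rate. Second, neither $\srho$ nor $\stau$ is guaranteed to be positive semidefinite, which is needed by the Graphical Lasso so that its log-determinant barrier is finite; I would remedy this by projecting the input onto the PSD cone in max-norm, noting that the projection error is absorbed by the existing $O(\sqrt{\log p/n})$ bound, or simply by restricting the Graphical Lasso claim to a regularized/projected input (CLIME and Dantzig do not require positive definiteness). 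The structural conditions required downstream — irrepresentability for the Graphical Lasso, boundedness of $\|\Omega\|_1$ and $\lambda_{\max}(\Sigma^0)$ for CLIME and Dantzig — are hypotheses on the latent parameter $\Sigma^0$ of the meta skew-elliptical model, so they transfer without modification from the Gaussian case.
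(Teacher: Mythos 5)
Your proposal is correct and follows essentially the same route as the paper: both arguments are plug-in consistency proofs that take the high-probability $\|\cdot\|_{\max}$ deviation bounds of Theorems \ref{th_sk_rho} and \ref{th_sk_tau} as the sufficient input condition (in the sense of \citet{yuan2010high}) for the Graphical Lasso, CLIME, and graphical Dantzig selector. Your additional remarks on positive semidefiniteness of the input for the Graphical Lasso and on the minimum-signal condition for exact edge recovery go beyond the paper's own (very terse) proof and address genuine details it leaves implicit.
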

\begin{proof}
    The proof is based on the observation that the sample correlation matrix $\widehat{S}$ is a sufficient statistic for all three shrinkage procedures: the graphical lasso, graphical Dantzig selector, and CLIME. As \citet{yuan2010high} points out, a sufficient condition to ensure that an estimator is consistent is if there exists some constant $c$, such that $\mathbb{P}\left(\|\widehat{S}-\Sigma^{0}\|_{\max}>c\sqrt{\frac{\log p}{n}}\right)\leq 1-\frac{1}{p}$, which can be replaced by \eqref{eq:consistency_rho} and \eqref{eq:consistency_tau} in Theorems \ref{th_sk_rho} and \ref{th_sk_tau}. 
\end{proof}
The convergence of the elliptical skew-(S)KEPTIC estimator also means that the precision matrix is bounded. Similar results are found by \citet{cai2011constrained} and \citet{yuan2010high}. 
In particular, let $\widehat{\Omega}^{mse-s}$ denote the inverse correlation matrix using the elliptical skew-(S)KEPTIC estimator and consider the Dantzig selector as the shrinkage procedure to find $\widehat{\Omega}$. We introduce a class of inverse matrix defined as $\mathcal{M}_1(\kappa,\tau, M):=\{\Omega:\Omega\succ 0, \text{diag}(\Omega^{-1})=1,\|\Omega\|_{1}\leq \kappa,\frac{1}{\tau}\leq \lambda_{\min}(\Omega)\leq \lambda_{\max}(\Omega)\leq \tau, \deg(\Omega)\leq M\}$, where $\kappa,\tau>1$. Given Theorem \ref{th: main}, we obtain the following result: 
\begin{theorem}
    For $1\leq q < \infty$, there exists a constant $c$ that depends on $\kappa$, $\tau$, $\lambda_{\min}(\Omega)$ and $\lambda_{\max}(\Omega)$, such that
    \begin{equation}
        \sup_{\Omega\in \mathcal{M}_{1}(\kappa,\tau,M)}\|\widehat{\Omega}^{mse-s}-\Omega\|_{q}=O_P\left( M \sqrt{\frac{\log p}{n}}\right).
    \end{equation}
provided that $\lim _{n \rightarrow \infty} \frac{n}{M^2 \log d}=\infty$ and $\delta=c \sqrt{\frac{\log d}{n}}$, for sufficiently large $c$, where $\delta$ is the parameter in \eqref{theta}.
\end{theorem}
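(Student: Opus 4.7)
My plan is to reduce the theorem to the max-norm concentration bound established in Theorems \ref{th_sk_rho} and \ref{th_sk_tau} and then invoke the standard Dantzig-selector analysis for sparse precision matrices, closely following \citet{yuan2010high}. I would work throughout on the high-probability event
\[
\|\widehat{S} - \Sigma^0\|_{\max} \leq c_0 \sqrt{\log p / n},
\]
with $\widehat{S}\in\{\widehat{S}^{\rho,\alpha},\widehat{S}^{\tau,\alpha}\}$ and $c_0$ absorbing the explicit constants from Theorems \ref{th_sk_rho}--\ref{th_sk_tau}; on this event every deterministic bound below translates directly into the claimed $O_P(\cdot)$ statement uniformly over $\Omega\in\mathcal{M}_1(\kappa,\tau,M)$.

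I would first pick $\delta = c\sqrt{\log p/n}$ with $c > c_0(1+\kappa)$ and verify that the population regression vector $\theta^j = -\Omega_{\backslash j,j}/\Omega_{jj}$ is feasible for problem \eqref{theta}. Since the true correlation matrix satisfies $\Sigma^0_{\backslash j,j} - \Sigma^0_{\backslash j,\backslash j}\theta^j = 0$, Hölder's inequality gives
\[
\|\widehat{S}_{\backslash j,j} - \widehat{S}_{\backslash j,\backslash j}\,\theta^j\|_\infty \leq (1+\|\theta^j\|_1)\,\|\widehat{S}-\Sigma^0\|_{\max} \leq \delta,
\]
where $\|\theta^j\|_1$ is controlled via $\|\Omega\|_1\leq\kappa$ together with the eigenvalue bounds $\tau^{-1}\leq\lambda_{\min}(\Omega)\leq\lambda_{\max}(\Omega)\leq\tau$ in $\mathcal{M}_1$. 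Once $\theta^j$ is feasible, the $\ell_1$-minimality of $\widehat{\theta}^j$ combined with the $M$-sparsity of $\theta^j$ (inherited from $\deg(\Omega)\leq M$) yields, through the cone/compatibility arguments standard in the Dantzig-selector literature, $\|\widehat\theta^j-\theta^j\|_1 \leq c' M\delta$ uniformly in $j$.

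Propagating this column-wise bound to the precision matrix via the plug-in identities in \eqref{theta} (with $\lambda_{\max}(\Omega)\leq\tau$ to control the prefactors) gives $\|\widehat\Omega_{\cdot,j}-\Omega_{\cdot,j}\|_1 \leq c'' M\delta$, hence $\|\widehat\Omega-\Omega\|_1 = O_P(M\sqrt{\log p/n})$ after assembling the columns. The symmetrization step is an $\ell_1$-projection onto the set of symmetric matrices, to which $\Omega$ itself belongs, so it cannot inflate the $\ell_1$ error by more than a constant factor; the $\|\cdot\|_\infty$ operator-norm bound follows symmetrically. For arbitrary $q\in[1,\infty)$ I would then invoke Riesz--Thorin interpolation on the symmetric error matrix $A = \widehat\Omega^{mse-s}-\Omega$: from $\|A\|_q \leq \|A\|_1^{1-1/q}\|A\|_\infty^{1/q}$ and $\|A\|_1 = \|A\|_\infty$, the $\ell_q$ rate inherits from the $\ell_1$ bound.

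The main obstacle, and the only step where the skew-adjusted structure of the plug-in matrix genuinely enters, is the feasibility check in the second paragraph, which rests entirely on the max-norm concentration guaranteed by Theorems \ref{th_sk_rho} and \ref{th_sk_tau}. After that, the argument is essentially a transcription of \citet{yuan2010high}'s proof with $\widehat{S}^{\rho,\alpha}$ or $\widehat{S}^{\tau,\alpha}$ replacing the sample covariance, and no further probabilistic input is required; the parallel analyses of \citet{cai2011constrained} give the CLIME version and Glasso follows analogously, as noted in the statement preceding the theorem.
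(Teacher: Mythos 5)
Your proposal is correct and follows essentially the same route as the paper, which simply invokes \citet{yuan2010high} to reduce the precision-matrix error to the max-norm bound $\|\widehat{S}-\Sigma^{0}\|_{\max}$ supplied by Theorems \ref{th_sk_rho} and \ref{th_sk_tau}; you merely spell out the feasibility, cone, symmetrization and interpolation steps that the paper leaves to the citation. (One cosmetic slip: the interpolation exponents should read $\|A\|_q\leq \|A\|_1^{1/q}\|A\|_\infty^{1-1/q}$, though since $\|A\|_1=\|A\|_\infty$ for symmetric $A$ this changes nothing.)
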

\begin{proof}
    This result follows \citet{yuan2010high}, which proves that the difference between the estimated precision matrix and the theoretical precision matrix is bounded by $\|\widehat{S}-\Sigma^{0}\|_{\max}$, where $\widehat{S}$ is the estimated correlation matrix and $\Sigma^{0}$ is the theoretical correlation matrix.
    In our case, to prove the bounds for the estimated correlation matrix, we just replace $\widehat{S}$ with $\srho$ and $\stau$ where the bounds are proved in Theorem \ref{th_sk_rho} and \ref{th_sk_tau}.
\end{proof}
\section{Experiments}\label{sec5}
In this section, we conduct an empirical analysis based on both synthetic and real data. The study of synthetic data is performed through numerical simulations, whereas the analysis of real data involves the estimation of undirected graphs for the stocks that compose the S\&P$500$ index.
\subsection{Numerical Simulations}
%We perform numerical simulations by generating synthetic data from the normal distribution, the closed skew-normal distribution, and the unified skew-$t$ distribution. For each dataset, we compute both the SKEPTIC and the skew-(S)KEPTIC estimators, and we evaluate the graph recovery performance of the proposed methods.
The implementation of the numerical simulations is carried out following these steps. Using the \texttt{R} package \textit{huge} (\citet{jiang2020package}), we generate the precision matrix $\Omega=(\Sigma^0)^{-1}$ using the "\textit{random}" structure, in which we randomly generate pairs of conditional dependencies in the off-diagonal elements, the sparsity of $\Omega$ is set to 0.02 to ensure the positive defitivness of the matrix so that the matrix can be inverted.  We consider the case of $p= 100$, and $n= 200$, and
we sample $n$ observations from $p$-variate normal distribution. We set the location parameter equal to zero, and $\Sigma^0=\Omega^{-1}$.

To evaluate the robustness of our estimators, we introduce a random contamination mechanism. For each marginal distribution, let $r \in(0,1)$ denote the contamination level. We randomly select $\lfloor n r\rfloor$ observations and replace them with values drawn from $\mathcal{Y} \sim N(0,5)$. 
Moreover, we apply the power transformation to the contaminated data, which is the following.  Let $g_0(t):=\operatorname{sign}(t)|t|^\gamma$ where $\gamma>0$ is a parameter. The power transformation for the $ j$th dimension is defined as
$$
g_j^0\left(z_j\right):=\frac{g_0\left(z_j-\mu_j\right)}{\sqrt{\int g_0^2\left(t-\mu_j\right) \phi\left(\left(t-\mu_j\right) / \sigma_j\right) d t}},
$$
where $\phi(\cdot)$ is the standard Gaussian density function.
These two procedures applied to the data induce a departure from normality, resulting in marginal distributions characterized by asymmetry and heavier tails compared to the Gaussian case. We consider three contamination levels (r = 0.05, 0.1, and 0.2), and three different parameters of the power function ($\gamma$ = 1.5, 2.5, and 3), considering a total of nine scenarios. 

In this analysis, we want to estimate the asymmetry of the data estimating the skewness of the data encoded in the matrix $D_\alpha$ of the distributions $f(\boldsymbol{X})\sim \operatorname{CSN}_{p,p}(0,\Sigma^0, D_\alpha L, 0, D_\kappa)$, and $f(\boldsymbol{X}) \sim \operatorname{SUT}_{p,p}(0,\Sigma^0, D_\alpha L, 0, D_\kappa)$. Regarding the closed skew-normal, we use the EM algorithm described in \cite{abe2021algorithm}, for the unified skew-$t$ we employ the likelihood-based method from the \texttt{R} package \textit{sn} developed by \citet{azzalini_pkg_sn}.

After the data generation, we compute the skew-(S)KEPTIC and the SKEPTIC estimators, which are then plugged into the glasso algorithm in \eqref{eq: glasso} to obtain $\Omega$. The glasso algorithm depends on a tuning parameter $\lambda_G$ whose choice is performed by applying the StARS method in \citet{liu2010stability},  which selects the regularization parameter based on the stability of the estimated graph under subsampling, and when the value is found for each dataset, we consider a set of 30 tuning parameters in the range $(\lambda_G-0.1, \lambda_G+0.1)$.
To evaluate the accuracy of the methods, we use the receiver operating curve ($\operatorname{ROC}$). We define the false positive rate and true positive rate for a fixed set of tuning parameters $\lambda_G$. When the tuning parameter is determined we estimate the graph: $\widehat{G}^{\lambda_G}=(V,\widehat{E}^{\lambda_G})$, to evaluate it, we define the number of false positives for a given $\lambda_G$, $\operatorname{FP}(\lambda_G)$, as the number of edges in $\widehat{E}^{\lambda_G}$ but not in $E$ and the number of false negatives, $\operatorname{FN}(\lambda_G)$, as the number of edges in $E$ but not in $\widehat{E}^{\lambda_G}$. We consider the rate of the number of false negatives and false positives:
\begin{equation}
    \operatorname{FNR}(\lambda_G):=\frac{\operatorname{FN}(\lambda_G)}{|E|}, \quad \text{and} \quad \operatorname{FPR}(\lambda_G):=\operatorname{FP}(\lambda_G) / \left[\left(\begin{array}{l}
d \\
2
\end{array}\right)-|E|\right].
\end{equation}
To demonstrate the comprehensive performance of the analyzed methods across the entire datasets, we define the ROC curve  
\begin{equation}\label{eq:roc}
(\operatorname{FNR}(\lambda_G), 1 - \operatorname{FPR}(\lambda_G)).
\end{equation}
Graphically, we show averaged ROC curves for all the considered scenarios over 50 trials, plotted using the coordinates in \eqref{eq:roc}. 
Fig. \ref{fig:numerical_simulations} illustrates that the skew-(S)KEPTIC estimators yield results comparable to those of the SKEPTIC estimator, which has been shown in \cite{liu2012high} to be optimal for a similar class of data. Importantly, incorporating skewness does not deteriorate the recovery of the true precision matrix $\Omega$. A more detailed comparison is provided in Table \ref{tab}, which reports the values (in percentage) of the Area Under the Curve (AUC), False Positive Rate (FPR), and False Negative Rate (FNR). Across all scenarios, taking into account both data contamination and power-function transformations, the elliptical skew-(S)KEPTIC estimators exhibit lower false positive rates and higher false negative rates, reflecting the smaller off-diagonal entries relative to those of the SKEPTIC estimator. Given that the tuning parameter $\lambda_G$ is fixed across all methods, it is not expected that the elliptical skew-(S)KEPTIC estimators outperform the SKEPTIC estimator in terms of overall graph recovery. Nevertheless, it is noteworthy that the skew-(S)KEPTIC estimators consistently exhibit superior performance compared to the Pearson correlation coefficient in the recovery of graphical structures.

These findings suggest that considering estimators that include the skewness of a skew-normal or a skew-t distribution and extend the {nonparanormal} SKEPTIC estimator, do not hinder the recovery of the $\Omega$ structure. On the contrary, they may offer a practical advantage due to the robustness across different data distributions. 
\begin{figure}[h!]
    \centering
    \includegraphics[width=0.9\textwidth, height=0.5\textheight, keepaspectratio]{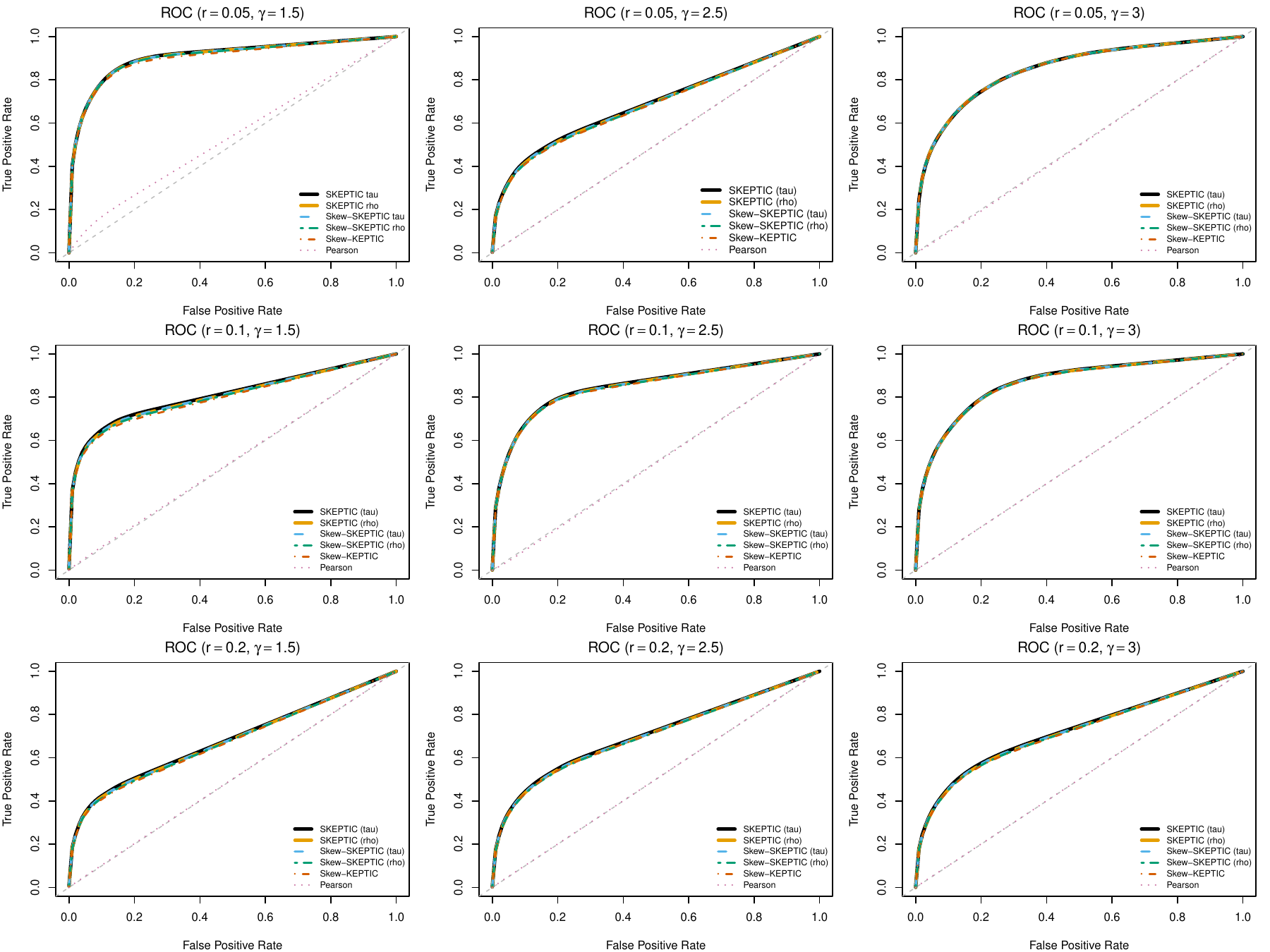}
    \caption{\textit{Comparison of average ROC curves, using the glasso algorithm, of the elliptical skew-(S)KEPTIC estimators with the SKEPTIC estimator. Results are determined over 50 trials. We consider, by row, three different data contamination and, by column, three different values of the parameter $\gamma$ that drives the power transformation.}}
    \label{fig:numerical_simulations}
\end{figure}

\begin{sidewaystable}[htbp]
    \centering
    \scriptsize % più piccolo di \small
    \setlength{\tabcolsep}{1.5pt} % spaziatura colonne ridotta
    \renewcommand{\arraystretch}{0.85} % altezza righe ridotta
    \caption{\textit{Quantitative comparisons of the 6 correlation matrix estimators using different distributions to generate data, and compute the $\Omega$ matrix. The graphs are estimated using the glasso algorithm. The values are in percentages.}}
    \label{tab}
    \resizebox{\textwidth}{!}{
    \begin{tabular}{|c|cc|ccc|ccc|ccc|ccc|ccc|ccc|}
        \hline
        & & & \multicolumn{3}{|c|}{SKEPTIC ($\widehat{\rho}$)} & \multicolumn{3}{|c|}{SKEPTIC ($\widehat{\tau}$)} & \multicolumn{3}{|c|}{SKEW-SKEPTIC ($\widehat{\rho}$)} & \multicolumn{3}{|c|}{SKEW-SKEPTIC ($\widehat{\tau}$)} & \multicolumn{3}{|c|}{SKEW-KEPTIC} & \multicolumn{3}{|c|}{PEARSON} \\
        \hline
        & &  & AUC & FPR & FNR & AUC & FPR & FNR & AUC & FPR & FNR & AUC & FPR & FNR & AUC & FPR & FNR & AUC & FPR & FNR\\
        \hline
        \multirow[t]{2}{*}{} & r = 0.05 & $\gamma$ = 1.5 & 91.0 & 6.8 & 11.8 & 91.0 & 6.9 & 11.5 & 91.0 & 6.6 & 12.2 & 91.0 & 6.7 & 11.9 & 90.0 & 6.4 & 13.6 & 54.0 & 14.6 & 78.4 \\
        & & s.e & (1.3) & (2.1) & (3.5) & (1.4) & (2.1) & (3.5) & (1.4) & (2.0) & (3.6) & (1.4) & (2.0) & (3.6) & (1.7) & (2.0) & (4.4) & (1.4) & (11.7) & (9.9) \\
        \multirow[t]{2}{*}{} & r = 0.05 & $\gamma$ = 2.5 & 87.8 & 17.8 & 6.5 & 88.8 & 17.9 & 6.4 & 88.8 & 17.6 & 6.8 & 88.8 & 17.7 & 6.6 & 88.7 & 17.5 & 7.2 & 50.1 & 30.9 & 68.3 \\
        & & s.e & (2.7) & (6.0) & (3.2) & (2.8) & (6.0) & (3.2) & (2.7) & (5.9) & (3.2) & (2.7) & (6.0) & (3.4) & (2.6) & (6.0) & (3.3) & (0.9) & (15.4) & (14.3) \\
        \multirow[t]{2}{*}{} & r = 0.05 & $\gamma$ = 3 & 84.8 & 20.8 & 9.5 & 84.8 & 21.0 & 9.4 & 84.8 & 20.7 & 9.7 & 84.8 & 20.8 & 9.6 & 84.7 & 20.6 & 10.0 & 49.6 & 41.5 & 58.1  \\
        & & s.e & (12.3) & (10.5) & (14.3) & (12.3) & (10.5) & (14.4) & (12.3) & (10.6) & (14.3) & (12.3) & (10.6) & (14.3) & (12.3) & (10.6) & (14.2) & (0.7) & (13.4) & (12.8) \\
        \multirow[t]{2}{*}{} & r = 0.1 & $\gamma$ = 1.5 & 81.0 & 3.4 & 34.6 & 81.4 & 3.5 & 33.7 & 80.6 & 3.3 & 35.5 & 81.0 & 3.4 & 34.6 & 80.1 & 3.2 & 36.7 & 50.4 & 24.2 & 74.8  \\
        & & s.e & (3.0) & (2.0) & (6.9) & (3.0) & (2.0) & (6.9) & (3.0) & (1.9) & (6.8) & (3.0) & (1.9) & (7.0) & (3.0) & (1.9) & (6.8) & (0.5) & (17.9) & (17.2)  \\
        \multirow[t]{2}{*}{} & r = 0.1 & $\gamma$ = 2.5 & 84.8 & 7.8 & 22.8 & 85.1 & 7.7 & 22.2 & 84.6 & 7.4 & 23.4 & 84.8 & 7.5 & 22.9 & 84.5 & 7.3 & 23.8 & 49.5 & 39.8 & 59.9 \\
        & & s.e & (1.8) & (1.8) & (4.9) & (1.7) & (1.8) & (4.6) & (1.8) & (1.8) & (5.0) & (1.8) & (1.8) & (4.9) & (1.8) & (1.8) & (4.9) & (0.6) & (15.6) & (15.1) \\
        \multirow[t]{2}{*}{} & r = 0.1 & $\gamma$ = 3 & 86.6 & 12.8 & 14.0 & 86.7 & 13.0 & 13.6 & 86.4 & 12.7 & 14.4 & 86.6 & 12.8 & 14.0 & 86.3 & 12.6 & 14.8 & 50.0 & 31.6 & 67.9 \\
        & & s.e & (2.2) & (3.4) & (6.0) & (2.2) & (3.4) & (5.7) & (2.3) & (3.6) & (6.1) & (2.2) & (3.4) & (5.8) & (2.3) & (3.4) & (6.0) & (0.7) & (15.0) & (14.5) \\
        \multirow[t]{2}{*}{} & r = 0.2 & $\gamma$ = 1.5 & 68.1 & 2.9 & 61.0 & 68.5 & 2.9 & 60.1 & 67.8 & 2.8 & 61.6 & 68.2 & 2.8 & 60.9 & 67.5 & 2.7 & 62.4 & 50.2 & 31.8 & 67.7 \\
        & & s.e & (4.2) & (1.5) & (9.5) & (4.3) & (1.5) & (9.6) & (4.2) & (1.5) & (9.5) & (4.3) & (1.5) & (9.6) & (4.3) & (1.4) & (9.6) & (0.4) & (12.3) & (11.8) \\
        \multirow[t]{2}{*}{} & r = 0.2 & $\gamma$ = 2.5 & 70.5 & 5.3 & 53.7 & 70.8 & 5.4 & 53.0 & 70.2 & 5.2 & 54.4 & 70.5 & 5.3 & 53.7 & 70.3 & 5.2 & 54.3 & 50.1 & 31.1 & 68.4 \\
        & & s.e & (3.1) & (1.8) & (7.8) & (3.3) & (1.8) & (8.0) & (3.2) & (1.8) & (7.9) & (3.3) & (1.8) & (8.1) & (3.3) & (1.8) & (8.1) & (0.6) & (15.6) & (14.9) \\
        \multirow[t]{2}{*}{} & r = 0.2 & $\gamma$ = 3 & 71.9 & 6.6 & 49.6 & 72.3 & 6.6 & 48.8 & 71.7 & 6.4 & 50.2 & 72.0 & 6.5 & 49.5 & 71.5 & 6.3 & 50.7 & 50.1 & 30.9 & 68.5 \\
        & & s.e & (4.0) & (3.3) & (9.9) & (3.8) & (3.4) & (9.7) & (3.9) & (3.3) & (9.8) & (3.8) & (3.2) & (9.7) & (3.9) & (3.2) & (9.5) & (0.6) & (14.5) & (14.0)\\
        \hline
    \end{tabular}
    }
    \vspace{0.2cm}
    \textit{AUC = Area Under Curve, FPR = False Positive Rate, and FNR = False Negative Rate.}
\end{sidewaystable}

\newpage
\subsection{Empirical Analysis: S\&P500 Log-Returns}\label{sec6}
In the empirical analysis, we study the conditional dependence relations on daily returns of S\&P$500$ companies in the period from $04/01/2016$ to $31/12/2022$ downloaded from \textit{Yahoo! Finance}, using the proposed elliptical skew-(S)KEPTIC estimators and comparing them with the SKEPTIC estimator. After cleaning the data, the dataset consists of 454 stocks, each with 1762 observations. The log-returns of the stocks in the S\&P$500$ index are empirically skewed and exhibit heavier tails than the normal distribution (\citet{cont2001empirical}). 
\begin{figure}
    \centering
\includegraphics[width=\textwidth,height=7cm]{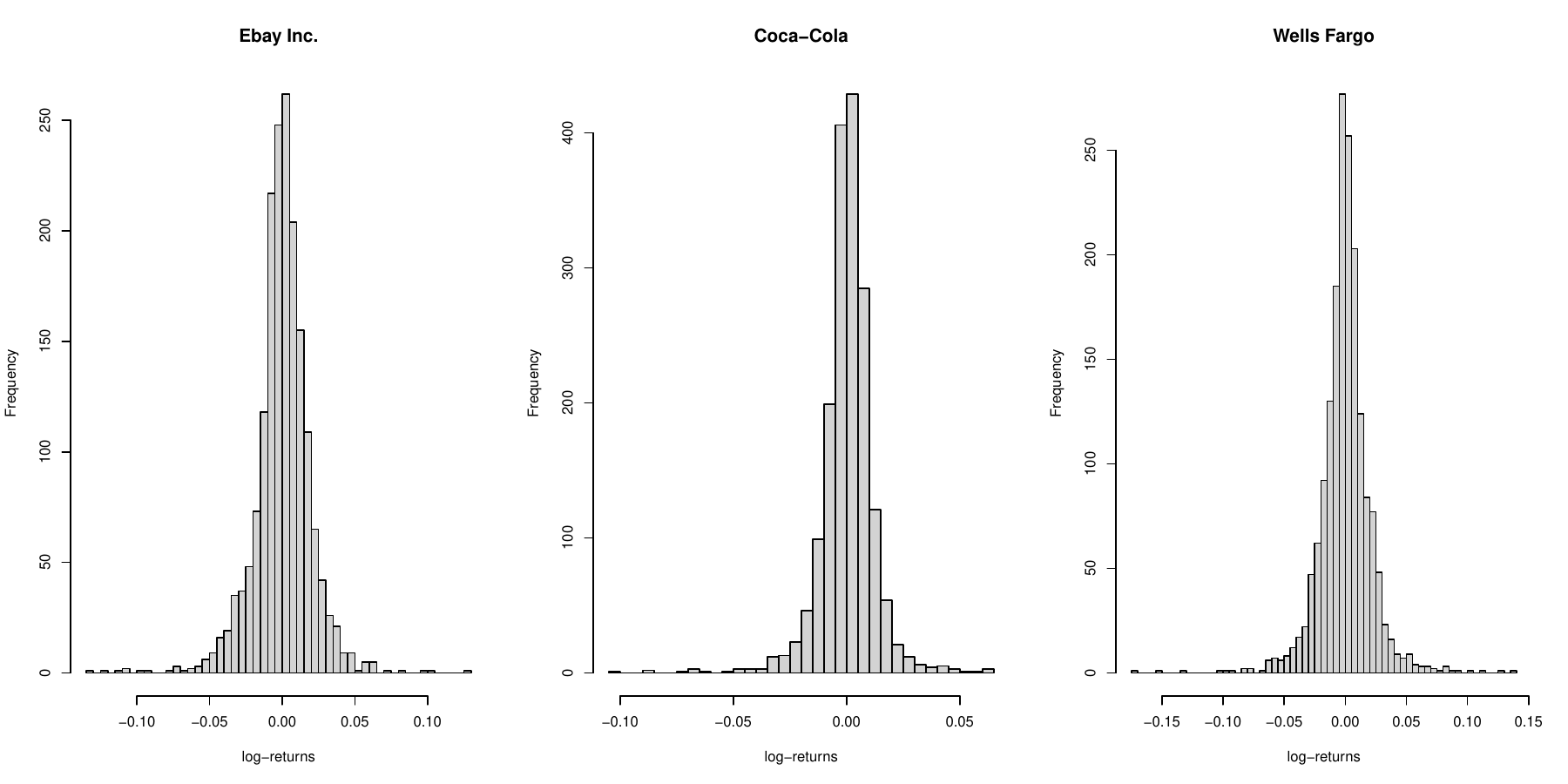}
    \caption{Illustration of leptokurtic distributions of daily log-returns, with a relevant skewness component. The stock data are from $04/01/2016$ to $31/12/2022$.}
    \label{fig: hists}
\end{figure}
As a motivating example, we benchmark the analysis of \citet{han2014scale}, reporting the asymmetry of log-returns in Figure \ref{fig: hists}, where we plot histograms of the returns of three stocks in our sample "eBay Inc.", "Coca-Cola" and "Wells Fargo"  belonging to three different sectors, respectively,  Consumer Discretionary, Consumer Staples, and Financials. We appreciate the departure from normality reflected in a leptokurtic distribution with a relevant skewness component. Moreover, we investigate the Gaussian assumption of the asset log-returns by computing normality tests with different significance levels at 1\% and 5\%. 
Table \ref{tab: tests} indicates a clear and complete deviation from normality in the stock data. This could be attributed to the presence of exceptional events, thus a huge presence of outlier values during the analyzed period, which lead to a departure from the assumption of normal distribution for log-returns. 

\begin{table}\caption{Normality tests S\&P500 log-returns data. This table illustrates the number of stocks that reject the null hypothesis of normality at 1\% and 5\% levels. }\label{tab: tests}
\centering
\begin{tabular}{cccc}
Significance Level & Kolmogorov-Smirnov & Shapiro-Wilk & Lilliefors \\
\toprule
1\% & 454 & 454 & 454 \\
5\% & 454 & 454 & 454 \\
\bottomrule
\end{tabular}
\end{table}

To estimate the graphs, given the large number of variables relative to observations, we determine the tuning parameter $\lambda_G$ using the StARS approach of \cite{liu2010stability}.
The tuning parameter $\lambda_G$ is set equal to $0.58$. 
To compute the elliptical skew-(S)KEPTIC estimators, we need to estimate the skewness parameters for the closed skew-normal distribution and for the unified skew-$t$ distribution. The skewness parameters are estimated using the same procedure as described in the numerical simulations.
%\textcolor{red}{To align the log-returns to the unified skew-$t$ distribution for computing the elliptical skew-KEPTIC estimator, we estimate the degrees of freedom of the distribution.  We follow \citet{arellano2010skew-t} using the maximum likelihood estimation method. In likelihood methods, the underlying assumption is that the observations are independent, therefore, we apply the best ARIMA model to the series using the \texttt{R} function \textit{auto.arima} to filter our returns into a white noise series, and we apply the likelihood method to the filtered series to estimate the degrees of freedom using the command \texttt{mst.mle} from the \texttt{R} package \textit{sn} developed by \citet{azzalini_pkg_sn}.}

%For estimating the undirected graph conforming to the SUT distribution, using the sut Kendall estimator, 
%we need to estimate the degrees of freedom of the log-returns distributions. We follow \citet{arellano2010skew-t} using the maximum likelihood estimation method. In likelihood methods, the underlying assumption is that the observations are independent, therefore we apply the best ARIMA model to the series using the R function \textit{auto.arima} to filter our returns into a white noise series, and we apply the likelihood method to the filtered series to estimate the degrees of freedom using the command \texttt{mst.mle} from the $R$ package \textit{sn} developed by \citet{azzalini_pkg_sn}.

Figure \ref{fig: undirected graph} shows the estimated graphs using SKEPTIC, elliptical skew-SKEPTIC, and elliptical skew-KEPITC estimators; all the correlation matrix estimators are based on Kendall's tau statistic.  
\begin{figure}[h!]
    \centering
\includegraphics[width=1\textwidth,height=16.5cm]{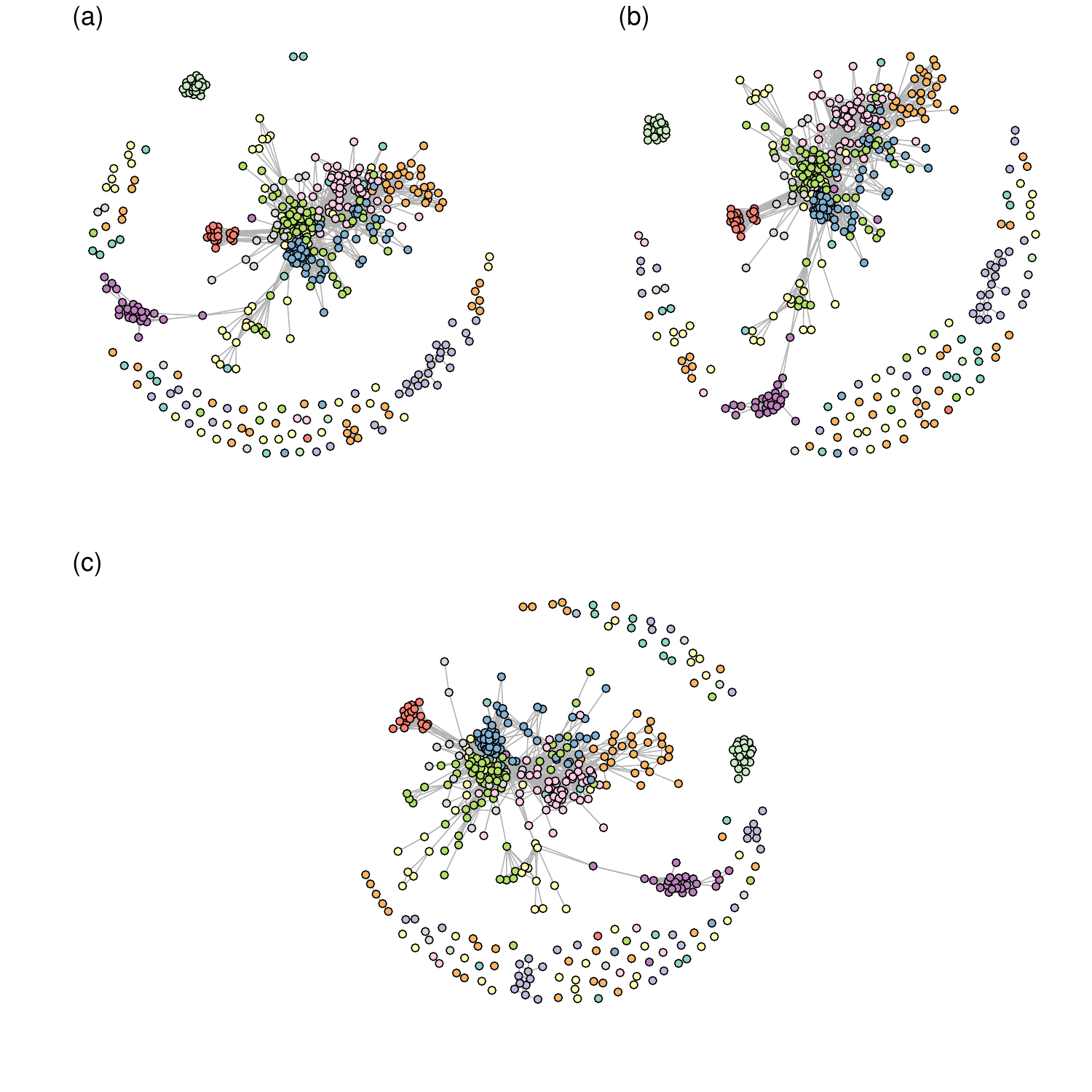}
    \caption{\textit{The resulting graphs of the log-returns of the S\&P 500 stocks from $04/01/2016$ to $31/12/2022$. Kendall's tau statistic is employed to estimate correlation. In the graph: (a) SKEPTIC estimator, (b) elliptical skew-SKEPTIC estimator, and (c) elliptical skew-KEPTIC estimator.}}
    \label{fig: undirected graph}
\end{figure}
The 454 stocks are categorized into 11 Global Industry Classification Standard (GICS) sectors, including \texttt{Consumer Discretionary} (51 stocks), \texttt{Consumer Staples} (33 stocks), \texttt{Energy} (21 stocks), \texttt{Financials} (67 stocks), \texttt{Health Care} (56 stocks), \texttt{Industrials} (69 stocks), \texttt{Information Technology} (59 stocks) \texttt{Telecommunications Services} (17 stocks), \texttt{Materials} (25 stocks), \texttt{Real Estate} (28 stocks), and \texttt{Utilities} (28 stocks).  %Nevertheless, \textsc{CSN}Kendall and sut Kendall show in a better way the grouping of the stocks, thus improving the interpretability of the visualization.

The stocks that belong to the same sector tend to cluster, and those at the periphery of the network are uncorrelated with the others during the considered period, thus, they are uncorrelated with the movement of the market. \citet{peralta2016network} show that those stocks have a greater weight in constructing a portfolio using the Markowitz approach. Therefore, in large-dimensions, a network approach can be valuable for identifying stocks with the least correlation to the market. That is, incorporating skewness and kurtosis components leads to a sparser network estimation compared to the normal case as reported in Table \ref{tab:sum_stat_net}. %Consequently, the elliptical skew-(S)KEPTIC estimators enhance graph sparsity.
\begin{table}[!ht]
\centering
\caption{Summary statistics of the estimated networks using Kendall's tau statistic for the correlation matrix estimation. SK $=$ SKEPTIC, SNK $=$ elliptical skew-SKEPTIC, STK $=$ elliptical skew-KEPTIC. DIM = dimension of the dataset. The notation "$>$" indicates that one method has a greater number of edges compared to the other.}\label{tab:sum_stat_net}
\label{tab:table}
\vspace{0.1cm}
\begin{small}
\begin{tabular}{|l |c|c|c|c|c|c|c|}
\hline
\bfseries{Network} & & \multicolumn{3}{c|}{\bfseries \ Edges No.} & \multicolumn{3}{c|}{\bfseries Edges diff.} \\
\cline{2-8}
\hline
& DIM & SK & STK & SNK & SK$>$SNK & SK$>$STK & SNK$>$STK \\
\hline
S\&P 500 & 454  &  344 &  329 & 341 & 3    & 15 & 12 \\
\hline
\end{tabular}
\end{small}
\end{table}
%\newpage
\section{Concluding remarks}\label{sec7}

We propose rank-based correlation matrix estimators, named the elliptical skew-(S)KEPTIC estimators,  to provide semiparametric estimation of graphical models built on meta skew-elliptical distributions. 
The proposed framework includes two classes of estimators: the elliptical skew-KEPTIC estimator, which employs Kendall's tau statistic, and the elliptical skew-SKEPTIC estimator, which also uses Spearman's rho statistic to estimate the correlation matrix.

We study the theoretical properties of the rank-based correlation matrix estimators and establish concentration bounds. These bounds yield statistical rates of convergence under the elementwise maximum norm for both the correlation matrix estimation and the resulting estimation of the precision matrix. 

The empirical analysis and numerical simulations demonstrate that the elliptical skew-(S)KEPTIC estimators yield the optimal performance when the data deviates from normality.  The application to the log-returns of the S\&P500 constituent stocks shows resulting graphs that are sparser compared to those estimated under normality. % aligning with previous findings on skew-normal graphical models. %\citep{sheng2023skewed,zareifard2016skew,nghiem2022estimation}.

Our findings demonstrate a reliable framework that is useful when dealing with data that exhibit high skewness which cannot be neglected, and its effectiveness is validated through the optimal graph recoveries of the numerical simulations.

%\textcolor{red}{Our analysis is not asymptotic, and the estimator is defined in a way in which we do not explicitly estimate the marginal transformation of the data. Theoretically, the slight reduction in the robustness of the estimator is compensated by the extension of the class of distributions over which we estimate the graphical model. }

%\textcolor{red}{We illustrate a methodology for modeling data that conforms to a $\operatorname{MS}\mathcal{E}$ distribution by treating the data generation process as a mixture of normal and half-normal distributions. 
%For both the multivariate random variables, the SKEPTIC estimator is defined. Thus, the estimator is a sum of two SKEPTIC estimators and a matrix accounting for skewness.}  
%This extension produces a sparser structure than the Gaussian graphical model, leading to a different interpretation of the resulting graphs. Further, the elliptical skew-(S)KEPTIC estimator shows a slower exponential concentration in $\|\cdot\|_{\max}$ rate than the SKEPTIC estimator. 

\clearpage

\appendix
\section{Proofs of elliptical skew-(S)KEPTIC estimators}\label{appendix1}
\subsection{Proof of \textit{Theorem \ref{main th}}}
\begin{proof}
We consider $\boldsymbol{X}\sim MS\mathcal{E}_{p,p}\!\big(\Omega^{-1},\,D_\alpha L,\,h^{(p)},\,0,\,D_\kappa;\,f\big)$, where
$f(\boldsymbol{X}) \stackrel{d}{=} V_0^{-1/2} Z_0$ and
$Z_0 \sim \operatorname{CSN}_{p,p}\!\big(0,\,\Omega^{-1},\,D_{\alpha}L,\,0,\,D_{\kappa}\big)$, with $D_{\kappa}=I_p$ and $V_0=1$.
Under this framework, the stochastic representation of $\boldsymbol{X}$ coincides with~\eqref{eq: stoc_rep}; explicitly:
\begin{align*}
\boldsymbol{X}
&\stackrel{d}{=} L^{-1} D_\kappa^{-1/2} D_\alpha (I+D_\alpha^2)^{-1/2}\, \boldsymbol{U}
\;+\; L^{-1} D_\kappa^{-1/2} (I+D_\alpha^2)^{-1/2}\, \boldsymbol{V}.
\end{align*}
\[
{\boldsymbol X}
= D_\alpha (I+D_\alpha^2)^{-1/2}\, \boldsymbol{U}
\;+\; (I+D_\alpha^2)^{-1/2}\, \boldsymbol{V}.
\]
Here $\boldsymbol{V} \sim \operatorname{NPN}(0,\Sigma^0,f)$, where $\Sigma^0 = L^{-1}D_\kappa L$, meaning that its joint CDF is:
\[
F_{V}(v_1,\ldots,v_p)
= \Phi_{\Sigma^0}\!\big(\,\Phi^{-1}(F_{V_1}(v_1)),\,\ldots,\,\Phi^{-1}(F_{V_p}(v_p))\,\big),
\]
where each $F_{V_j}$ is estimated from normalized ranks as described in Section~\ref{sec: npn}.
Regarding the $U$ distribution:
\[
U_{ij}=F_{\mathrm{HN}}^{-1}(\widehat r_{ij})
=\Phi^{-1}\!\Big(\frac{\widehat r_{ij}+1}{2}\Big),
\]
where $F_{\mathrm{HN}}^{-1}$ is the standard half-normal quantile function.
By copula invariance under strictly increasing marginal transforms, $\operatorname{Copula}(U)=\operatorname{Copula}(V)$, (see e,g, \citealp{embrechts2002correlation}), this means that \ $U$ and $V$ share the same Gaussian copula dependence structure $\Sigma^0$.
Therefore, the joint CDF of $U$ (Gaussian copula with half-normal marginals) is:
\[
F_U(u_1,\ldots,u_p)
= \Phi_{\Sigma^0}\!\big(\,\Phi^{-1}(F_{\mathrm{HN}}(u_1)),\,\ldots,\,\Phi^{-1}(F_{\mathrm{HN}}(u_p))\,\big),\qquad u_i\ge 0,
\]
and $\Sigma^0$ can be estimated with the SKEPTIC estimator, where the generic element of $\Sigma^0_{ij}=\sin\Big(\frac{\pi}{2}\tau_{ij}\Big)$.
When $V_0\sim \mathrm{Gamma}(\nu/2,\nu/2)$, $\boldsymbol{X}$ reduces to a unified skew-$t$ (SUT) model and $V$ is a semiparametric $t$-copula. \cite{demarta2005t} show that the identity
$\Sigma^0_{ij}=\sin\!\big(\tfrac{\pi}{2}\tau_{ij}\big)$ still holds for $t$-copula, so the dependence structure is the same for the $t$ and half-$t$ distribution.

Let $A_0 = D_\alpha (I+D_\alpha^2)^{-1/2}$ and $B_0 = (I+D_\alpha^2)^{-1/2}$, and denote by
$S_\tau:=\sin\!\big(\tfrac{\pi}{2}\,T\big)$, where $T$ is the correlation matrix estimated with Kendall's tau statistic, the elliptical skew-KEPTIC estimator is:
\begin{align*}
S^{\tau,\alpha} \;\equiv\; \operatorname{Cor}({\boldsymbol X})
&= A_0\, S_\tau\, A_0 \;+\; B_0\, S_\tau\, B_0 \\
&= (I+D_\alpha^2)^{-1/2}\,\Big(S_\tau + D_\alpha S_\tau D_\alpha\Big)\,(I+D_\alpha^2)^{-1/2}.
\end{align*}
In the element-wise form, for $i\neq j$, the estimator is:
\[
S^{\tau,\alpha}_{ij}
= \sin\!\Big(\tfrac{\pi}{2}\tau_{ij}\Big)\;
\frac{1+\alpha_i \alpha_j}{\sqrt{(1+\alpha_i^2)(1+\alpha_j^2)}},
\quad \text{and:} \quad
S^{\tau,\alpha}_{ii}=1.
\]
\end{proof}
\subsection{Proof of Lemma \ref{Lemma_sp_rho}}
\begin{proof}
    We recall Lemma \ref{Lemma: SKEPTIC}, where for the normal distribution, and for $i \neq j$, we have: ${S}_{ij}=2\sin\left(\frac{\pi}{6}{\rho}_{ij}\right)=\sin\left(\frac{\pi}{2}{\tau}_{ij}\right)$. 
  Let $A_0 = D_\alpha (I+D_\alpha^2)^{-1/2}$ and $B_0 = (I+D_\alpha^2)^{-1/2}$, and denote by
$S_\rho:=2\sin\!\big(\tfrac{\pi}{6}\,P\big)$, where $P$ is the correlation matrix estimated using Spearman's rho statistic. Following the  proof of Theorem  \ref{main th}, the elliptical skew-SKEPTIC estimator is: 
\begin{align*}
S^{\rho,\alpha} \;\equiv\; \operatorname{Cor}({\boldsymbol X})
&= A_0\, S_\rho\, A_0 \;+\; B_0\, S_\rho\, B_0 \\
&= (I+D_\alpha^2)^{-1/2}\,\Big(S_\rho + D_\alpha S_\rho D_\alpha\Big)\,(I+D_\alpha^2)^{-1/2}.
\end{align*}
In the element-wise form, for $i\neq j$, the estimator is:
\[
S^{\rho,\alpha}_{ij}
= 2\sin\!\Big(\tfrac{\pi}{6}\rho_{ij}\Big)\;
\frac{1+\alpha_i \alpha_j}{\sqrt{(1+\alpha_i^2)(1+\alpha_j^2)}},
\qquad
S^{\rho,\alpha}_{ii}=1.
\]
\end{proof}

\subsection{Proof of \textit{Theorem \ref{th_sk_rho}}}
Let $B_{ij}=\dfrac{1+\alpha_i\alpha_j}{\sqrt{(1+\alpha_i^2)(1+\alpha_j^2)}}$ and,
for $i\neq j$, we write the target as
$S^{\rho,\alpha}_{ij}=2\sin\!\big(\tfrac{\pi}{6}\rho_{ij}\big)\,B_{ij}$,
and $S^{\rho,\alpha}_{ii}=1$.
For $i\neq j$:
\begin{align*}
\left|\widehat{S}^{\rho,\alpha}_{ij}-{S}^{\rho,\alpha}_{ij}\right| &\leq \left|\widehat{B}_{ij}2\sin\!\big(\tfrac{\pi}{6}\widehat{\rho}_{ij}\big)-B_{ij}2\sin\!\big(\tfrac{\pi}{6}\rho_{ij}\big)\right|\\
&\leq |\widehat{B}_{ij}|\left|2\sin\!\big(\tfrac{\pi}{6}\widehat{\rho}_{ij}\big)-2\sin\!\big(\tfrac{\pi}{6}\rho_{ij}\big)\right|+|\widehat{B}_{ij}-B_{ij}|,
\end{align*}
since, by construction, $|\widehat{B}_{ij}|\leq 1$. 
From \cite{liu2012high}, we know that, with probability $1-1/p^2$, we have: 
$$
\max_{ij}\left|2\sin\!\big(\tfrac{\pi}{6}\widehat{\rho}_{ij}\big)-2\sin\!\big(\tfrac{\pi}{6}\rho_{ij}\big)\right|\leq 8 \pi \sqrt{\frac{\log p}{n}} .
$$
Regarding the term $|\widehat{B}_{ij}-B_{ij}|$, we define $f(a,b)=\dfrac{1+ab}{\sqrt{(1+a^2)(1+b^2)}}$ so that
$\widehat B_{ij}=f(\widehat\alpha_i,\widehat\alpha_j)$ and $B_{ij}=f(\alpha_i,\alpha_j)$
.
A direct calculation shows that:
\[
\frac{\partial f}{\partial a}(a,b)=\frac{b-a}{(1+a^2)^{3/2}(1+b^2)^{1/2}},
\qquad
\frac{\partial f}{\partial b}(a,b)=\frac{a-b}{(1+a^2)^{1/2}(1+b^2)^{3/2}},
\]
hence $|\partial f/\partial a|\le 1$ and $|\partial f/\partial b|\le 1$ on $\mathbb R^2$.
By the mean value theorem in two variables,
\begin{equation}\label{eq:B-Lipschitz}
|\widehat B_{ij}-B_{ij}|
=|f(\widehat\alpha_i,\widehat\alpha_j)-f(\alpha_i,\alpha_j)|
\le |\widehat\alpha_i-\alpha_i|+|\widehat\alpha_j-\alpha_j|.
\end{equation}
Therefore, we have that: 
$$
\max_{ij}\left|\widehat{S}^{\rho,\alpha}_{ij}-{S}^{\rho,\alpha}_{ij}\right|\leq 8\pi\sqrt{\frac{\log p}{n}}+ 2\max_{1\leq l \leq p}|\widehat{\alpha}_l - \alpha_l|
$$
Define $A:=\max_{1\le \ell\le p}|\widehat{\alpha}_\ell-\alpha_\ell|$ and fix an arbitrary threshold $t_\alpha>0$.
On the event $\{A\le t_\alpha\}$ we have $2A\le 2t_\alpha$. Hence:
\[
\mathbb{P}\big(2A\le 2t_\alpha\big)
\;\ge\; 1-\mathbb{P}(A>t_\alpha).
\]
Equivalently, with probability at least $1-\mathbb{P}(A>t_\alpha)$, the plug\mbox{-}in error term satisfies
\[
2\,\max_{1\le \ell\le p}\big|\widehat{\alpha}_\ell-\alpha_\ell\big|\;\le\;2t_\alpha.
\]

Exploiting Boole's inequality, we obtain:
\[
\mathbb P\!\left(\|\widehat S^{\rho,\alpha}-\Sigma^0\|_{\max}
\;\le\; 8\,\pi\,\sqrt{\frac{\log p}{n}} \;+\; 2t_\alpha\right)
\;\ge\; 1-\frac{1}{p}-\mathbb P(A>t_\alpha).
\]
In particular, if $t_\alpha$ is chosen so that $\mathbb P(A>t_\alpha)\le 1/p^2$,
then with probability at least $1-2/p^2$,
\[
\|\widehat S^{\rho,\alpha}-\Sigma^0\|_{\max}
\;\le\;8\,\pi\,\sqrt{\frac{\log p}{n}} \;+\; 2t_\alpha.
\]
Moreover, if we assume that the $\alpha$ parameters are correctly estimated using likelihood-based methods, we have that $A=O_{P}\!\left(\sqrt{\frac{\log p}{n}}\right)$, 
taking $t_\alpha=C_\alpha\sqrt{\frac{\log p}{n}}$, where $C_\alpha$ is a constant value that depends from $\alpha$, we obtain:
\[
\|\widehat S^{\rho,\alpha}-\Sigma^0\|_{\max}
\;\le\; \big(8\,\pi + 2C_\alpha\big)\sqrt{\frac{\log p}{n}}
\quad,
\]
with probability greater than $1-\tfrac{2}{p^2}$.
\subsection{Proof of \textit{Theorem \ref{th_sk_tau}}}
From \cite{liu2012high}, we know that with probability $1-1/p^2$ we have that:
$$
\left|\widehat{S}_{ij}^\tau-\Sigma_{ij}^0\right| \leq 2.45 \pi \sqrt{\frac{\log p}{n}} 
$$
The results for our estimator become:
$$
\max_{ij}\left|\widehat{S}^{\tau,\alpha}_{ij}-{S}^{\tau,\alpha}_{ij}\right|\leq 2.45\pi\sqrt{\frac{\log p}{n}}+ 2\max_{1\leq l \leq p}|\widehat{\alpha}_l - \alpha_l|,
$$
where choosing a probability $1-1/p^2$, and assuming the convergence of the likelihood-based estimator, we obtain:
\[
\|\widehat S^{\tau,\alpha}-\Sigma^0\|_{\max}
\;\le\; \big(2.45\,\pi + 2C_\alpha\big)\sqrt{\frac{\log p}{n}},
\]
with probability greater than $1-\tfrac{2}{p^2}$.

\clearpage
\bibliographystyle{apalike}  % Usa lo stile autore-anno Elsevier
\bibliography{jmva.bib} 

@article{liu2009nonparanormal,
  title={The nonparanormal: Semiparametric estimation of high dimensional undirected graphs},
  author={Liu, Han and Lafferty, John and Wasserman, Larry},
  journal={Journal of Machine Learning Research},
  volume={10},
  pages={2295--2328},
  year={2009},
  publisher={Microtome Publishing}
}

@article{liu2012high,
  title={High Dimensional Semiparametric Gaussian Copula Graphical Models},
  author={Liu, Han and Han, Fang and Yuan, Ming and Lafferty, John and Wasserman, Larry},
  journal={The Annals of Statistics},
  volume={40},
  number={4},
  pages={2293--2326},
  year={2012}
}

@article{zareifard2016skew,
  title={A skew Gaussian decomposable graphical model},
  author={Zareifard, Hamid and Rue, H{\aa}vard and Khaledi, Majid Jafari and Lindgren, Finn},
  journal={Journal of Multivariate Analysis},
  volume={145},
  pages={58--72},
  year={2016},
  publisher={Elsevier}
}

@article{kruskal1958ordinal,
  title={Ordinal measures of association},
  author={Kruskal, William H},
  journal={Journal of the American Statistical Association},
  volume={53},
  number={284},
  pages={814--861},
  year={1958},
  publisher={Taylor \& Francis}
}

@article{cont2001empirical,
  title={Empirical properties of asset returns: stylized facts and statistical issues},
  author={Cont, Rama},
  journal={Quantitative Finance},
  volume={1},
  number={2},
  pages={223–236},
  year={2001},
  publisher={IOP Publishing}
}

@article{friedman2008sparse,
  title={Sparse inverse covariance estimation with the graphical lasso},
  author={Friedman, Jerome and Hastie, Trevor and Tibshirani, Robert},
  journal={Biostatistics},
  volume={9},
  number={3},
  pages={432--441},
  year={2008},
  publisher={Oxford University Press}
}

@article{cai2011constrained,
  title={A constrained $\ell_1$ minimization approach to sparse precision matrix estimation},
  author={Cai, Tony and Liu, Weidong and Luo, Xi},
  journal={Journal of the American Statistical Association},
  volume={106},
  number={494},
  pages={594--607},
  year={2011},
  publisher={Taylor \& Francis}
}

@article{dominguez2003multivariate,
  title={The multivariate closed skew normal distribution},
  author={Domìnguez-Molina, J and Gonzàlez-Farìas, G and Gupta, AK},
  journal={Technical Report 03-12},
  year={2003},
  publisher={Department of Mathematics and Statistics, Bowling Green State University}
}

@article{banerjee2008model,
  title={Model selection through sparse maximum likelihood estimation for multivariate Gaussian or binary data},
  author={Banerjee, Onureena and El Ghaoui, Laurent and d'Aspremont, Alexandre},
  journal={The Journal of Machine Learning Research},
  volume={9},
  pages={485--516},
  year={2008},
  publisher={JMLR. org}
}

@article{arellano2010skew-t,
  title={Multivariate extended skew-t distributions and related families},
  author={Arellano-Valle, Reinaldo B and Genton, Marc G},
  journal={Metron},
  volume={68},
  pages={201--234},
  year={2010},
  publisher={Springer}
}

@article{liu2012transelliptical,
  title={Transelliptical graphical models},
  author={Liu, Han and Han, Fang and Zhang, Cun-hui},
  journal={Advances in Neural Information Processing Systems},
  volume={25},
  year={2012}
}

@article{meinshausen2006high,
  title={High-dimensional graphs and variable selection with the Lasso},
  author={Meinshausen, Nicolai and B{\"u}hlmann, Peter},
  journal={The Annals of Statistics},
  volume={34},
  number={1},
  pages={1436--1462},
  year={2006}
}

@article{yuan2010high,
  title={High dimensional inverse covariance matrix estimation via linear programming},
  author={Yuan, Ming},
  journal={The Journal of Machine Learning Research},
  volume={11},
  pages={2261--2286},
  year={2010},
  publisher={JMLR. org}
}

@article{arellano2010skew-e,
  title={Multivariate unified skew-elliptical distributions},
  author={Arellano-Valle, Reinaldo B and Genton, Marc G},
  journal={Chilean Journal of Statistics},
  volume={1},
  number={1},
  pages={17--33},
  year={2010},
  publisher={Citeseer}
}

@Manual{jiang2020package,
    title = {huge: High-Dimensional Undirected Graph Estimation},
    author = {Haoming Jiang and Xinyu Fei and Han Liu and Kathryn Roeder and John Lafferty and Larry Wasserman and Xingguo Li and Tuo Zhao},
    year = {2021},
    note = {R package version 1.3.5},
    url = {https://CRAN.R-project.org/package=huge},
  }

@Manual{azzalini_pkg_sn,
    title = {The {R} package \texttt{sn}: The skew-normal and related distributions such as the skew-$t$ and the {SUN} (version 2.1.1).},
    author = {Azzalini A. Azzalini},
    address = {Universit\`a degli Studi di Padova, Italia},
    year = {2023},
    note = {Home page: \url{http://azzalini.stat.unipd.it/SN/}},
    URL = {https://cran.r-project.org/package=sn},
  }

@article{han2014scale,
  title={Scale-invariant sparse \textit{PCA} on high-dimensional meta-elliptical data},
  author={Han, Fang and Liu, Han},
  journal={Journal of the American Statistical Association},
  volume={109},
  number={505},
  pages={275--287},
  year={2014},
  publisher={Taylor \& Francis}
}

@article{nghiem2022estimation,
  title={Estimation of graphical models for skew continuous data},
  author={Nghiem, Linh H and Hui, Francis KC and M{\"u}ller, Samuel and Welsh, Alan H},
  journal={Scandinavian Journal of Statistics},
  volume={49},
  number={4},
  pages={1811--1841},
  year={2022},
  publisher={Wiley Online Library}
}

@article{sheng2023skewed,
  title={On skewed Gaussian graphical models},
  author={Sheng, Tianhong and Li, Bing and Solea, Eftychia},
  journal={Journal of Multivariate Analysis},
  volume={194},
  pages={105-129},
  year={2023},
  publisher={Elsevier}
}

@article{azzalini1985class,
  title={A class of distributions which includes the normal ones},
  author={Azzalini, Adelchi},
  journal={Scandinavian Journal of Statistics},
  volume={12},
  pages={171--178},
  year={1985}
}

@book{fang2018symmetric,
  title={Symmetric multivariate and related distributions},
  author={Fang, Kai Wang},
  year={2018},
  publisher={CRC Press}
}

@article{peralta2016network,
  title={A network approach to portfolio selection},
  author={Peralta, Gustavo and Zareei, Abalfazl},
  journal={Journal of Empirical Finance},
  volume={38},
  pages={157--180},
  year={2016},
  publisher={Elsevier}
}

@article{abe2021algorithm,
  title={$\operatorname{EM}$ algorithm using overparameterization for the multivariate skew-normal distribution},
  author={Abe, Toshihiro and Fujisawa, Hironori and Kawashima, Takayuki and Ley, Christophe},
  journal={Econometrics and Statistics},
  volume={19},
  pages={151--168},
  year={2021},
  publisher={Elsevier}
}

@article{hult2002multivariate,
  title={Multivariate extremes, aggregation and dependence in elliptical distributions},
  author={Hult, Henrik and Lindskog, Filip},
  journal={Advances in Applied Probability},
  volume={34},
  number={3},
  pages={587--608},
  year={2002},
  publisher={Cambridge University Press}
}

@article{liu2010stability,
  title={Stability approach to regularization selection (stars) for high dimensional graphical models},
  author={Liu, Han and Roeder, Kathryn and Wasserman, Larry},
  journal={Advances in Neural Information Processing Systems},
  volume={23},
  year={2010}
}

@article{rossell2021dependence,
  title={Dependence in elliptical partial correlation graphs},
  author={Rossell, David and Zwiernik, Piotr},
  journal={Electronic Journal of Statistics},
  volume={15},
  number={2},
  pages={4236--4263},
  year={2021},
  publisher={The Institute of Mathematical Statistics and the Bernoulli Society}
}

@article{embrechts2002correlation,
  title={Correlation and dependence in risk management: properties and pitfalls},
  author={Embrechts, Paul and McNeil, Alexander and Straumann, Daniel},
  journal={Risk Management: Value at Risk and Beyond},
  volume={1},
  pages={176--223},
  year={2002},
  publisher={New York}
}

@article{demarta2005t,
  title={The t copula and related copulas},
  author={Demarta, Stefano and McNeil, Alexander J},
  journal={International Statistical Review},
  volume={73},
  number={1},
  pages={111--129},
  year={2005},
  publisher={Wiley Online Library}
}
\end{document}